\newdefinition{remark}{Remark}
\newtheorem{theorem}{Theorem}
\newdefinition{assumption}{Assumption}
\newproof{proof}{Proof}
\newtheorem{lemma}{Lemma}
\newdefinition{example}{Example}
\newdefinition{definition}{Definition}
\newcommand{\lV}{\left\Vert}
\newcommand{\rV}{\right\Vert}
\journal{Systems \& Control Letters}
\begin{document}
	
	\begin{frontmatter}
		
		
		
		\title{Privacy-Preserving Distributed Online Mirror Descent for Nonconvex Optimization}
		
		
		\author[1]{Yingjie Zhou} 
		\ead{Zhouyingjie7@163.com}
		\author[2]{Tao Li\corref{mycorrespondingauthor}}
		\cortext[mycorrespondingauthor]{Corresponding author.}
		\ead{tli@math.ecnu.edu.cn}
		
		\address[1]{School of Mathematical Sciences, East China Normal University, Shanghai, 200241, China}
		\address[2]{Key Laboratory of Mathematics and Engineering Applications, Ministry of Education, School of Mathematical Sciences, East China Normal University, Shanghai, 200241, China}
		\begin{abstract}
			We investigate the distributed online nonconvex optimization problem with differential privacy over time-varying networks. Each node minimizes the sum of several nonconvex functions while preserving the node's differential privacy.
			We propose a privacy-preserving distributed online mirror descent algorithm for nonconvex optimization, which uses the mirror descent to update decision variables and the Laplace differential privacy mechanism to protect privacy. Unlike the existing works, the proposed algorithm allows the cost functions to be nonconvex, which is more applicable. Based upon these, we prove that if the communication network is $B$-strongly connected and the constraint set is compact, then by choosing the step size properly, the algorithm guarantees $\epsilon$-differential privacy at each time. Furthermore, we prove that if the local cost functions are $\beta$-smooth, then the regret over time horizon $T$ grows sublinearly while preserving differential privacy, with an upper bound $O(\sqrt{T})$.  Finally, the effectiveness of the algorithm is demonstrated through numerical simulations.
		\end{abstract}
		
		
		
		\begin{keyword}
			Nonconvex problems \sep differential privacy \sep distributed online optimization \sep
			regret analysis.
			
			
			
		\end{keyword}
		
	\end{frontmatter}
	
		\section{Introduction}
		In distributed optimization, each node is aware of its local cost function. Nodes exchange information with their neighboring nodes through a communication network and collaborate to optimize the global cost function. This collaboration results in convergence towards the global optimal solution, where the global cost function is the sum of the local cost functions of all nodes. In numerous practical scenarios, each node's local cost function varies over time which leads to distributed online optimization problems.
		Distributed online optimization problems have wide applications in signal processing (\cite{application_signal1}), economic dispatch (\cite{application_energy1}), and sensor networks (\cite{application_decision}) et al. The regret is commonly used to evaluate the performance of a distributed online optimization algorithm. If the regret grows sublinearly, then the algorithm is effective. The existing works show that if the cost function is convex, then the best regret bound achieved by an algorithm is $O(\sqrt{T})$, and if the cost function is strongly convex, then the best regret bound is $O(\ln(T))$, where $T$ denotes the time horizon (\cite{optimal_regret_bound,bestregret}). In recent years, distributed online optimization problems have been extensively studied (\cite{online_ADMM, distributed_online1,time-varying-graph,pseudoconvex}).
		The algorithms above are designed based on the Euclidean distance. Such algorithms often face challenges in computing projections for complex cost functions and constraint sets, e.g. problems with simplex constraints. Beck et al. (\cite{beck2003mirror}) proposed the mirror descent algorithm based on the Bregman distance instead of the Euclidean distance, which effectively improves the computational efficiency. The mirror descent algorithm is highly effective for handling large-scale optimization problems (\cite{ben2001ordered}), and numerous studies have focused on extending this algorithm to distributed settings (\cite{distributed_mirror_descent1,distributed_mirror_descent2}).

			In distributed algorithms, each node possesses a data set containing its private information. By collecting the information exchanged between nodes, attackers could potentially deduce the data sets of the nodes, leading to privacy leakage. To protect private information, an effective approach is the differential privacy mechanism proposed by Dwork et al. (\cite{differential_privacy}). The fundamental principle of the differential privacy mechanism is to add a certain level of noises to the communication, thereby ensuring that attackers can only obtain limited privacy information from their observations.  Recently, extensive works have been proposed for distributed online convex optimization problems with differential privacy (\cite{hongyiguang_stronglyconvex,sconvex_umbalanced_graph,convex_directed_graph,time-varying_differential_privacy,stronglyconvex,online_defferential_privacy_extra1,online_defferential_privacy_extra2}). Yuan et al. (\cite{hongyiguang_stronglyconvex}) proposed a privacy-preserving distributed online optimization algorithm based on the mirror descent. For a fixed graph, they proved that if the cost function is strongly convex, then the regret bound of the algorithm is the same as that without differential privacy, which is $O(\ln{(T)})$. Zhao et al. (\cite{sconvex_umbalanced_graph}) proposed a distributed online optimization algorithm with differential privacy based on one-point residual feedback. They proved that, even if the gradient information of the cost function is unknown, the algorithm can still achieve differential privacy and the regret of the algorithm grows sublinearly. For constrained optimization problems, Lü et al. (\cite{convex_directed_graph}) introduced an efficient privacy-preserving distributed online dual averaging algorithm by using the gradient rescaling strategy. Li et al. (\cite{time-varying_differential_privacy}) proposed a framework for differentially private distributed optimization. For time-varying communication graphs, they obtained the regrets  bounds $O(\sqrt{T})$ and $O(\ln{(T)})$ for convex and strongly convex cost functions, respectively. Zhu et al. (\cite{stronglyconvex}) proposed a differentially private distributed stochastic subgradient online convex optimization algorithm based on weight balancing over time-varying directed networks. They demonstrated that the algorithm ensures $\epsilon$-differential privacy and provided the regret bound of the algorithm.
			
			The aforementioned privacy-preserving distributed online optimization algorithms all address the cases of convex cost functions. However, in practical applications, many problems involve nonconvex cost functions. For example, in machine learning, the cost function may be nonconvex due to sparsity and low-rank constraints (\cite{nonconvex_example_1}). In wireless communication, energy efficiency problems may also involve nonconvex cost functions due to constraints on node transmission power (\cite{nonconvex_example_2}). Since the cost function is nonconvex, finding a global minimizer is challenging. Therefore, the regret for convex optimization cannot be used to measure the performance of nonconvex online algorithms. For nonconvex optimization, the goal is usually to find a point that satisfies the first-order optimality condition (\cite{Hazan_nonconvex,online_Newton,online_nonconvex,lu_nonconvex1}).
			They used the regret based on the first-order optimality condition to evaluate the performance of the algorithm.
			Under appropriate assumptions, they proved that the regret grows sublinearly.
			
			The privacy-preserving distributed online nonconvex optimization problems have wide applications in practical engineering fields, such as distributed target tracking.
			Up to now, the studies on privacy-preserving distributed nonconvex optimization have been restricted to offline settings (\cite{wang2023decentralized,khajenejad2022guaranteed}). In \cite{wang2023decentralized}, a new algorithm for decentralized nonconvex optimization was proposed, which can enable both rigorous differential privacy and convergence.	
Compared with the existing works, the main challenges for privacy-preserving distributed online nonconvex optimization  are as follows. The first challenge is ensuring the convergence of the algorithm with the differential privacy mechanism. Existing works on distributed online nonconvex optimization do not incorporate privacy protection, and adding noises to decision variables may lead to divergence for differential privacy in distributed online optimization algorithms. The second challenge is improving the computational efficiency of online algorithms. For online optimizations, computational resources are limited, and the immediate decision-making is required. Compared with \cite{wang2023decentralized}, which addressed unconstrained optimization problems, we focus on optimization problems with set constraints, making the solution process more difficult and the computational cost higher. The algorithm in \cite{wang2023decentralized} is based on the gradient descent. For constrained optimization problems, directly incorporating Euclidean projection into the algorithm given in \cite{wang2023decentralized} may lead to high computational costs. Motivated by selecting Bregman projections based on different constraint conditions can improve computational efficiency, we combine the differential privacy mechanism with the distributed mirror descent method to design a privacy-preserving distributed online  mirror descent algorithm for nonconvex optimization (DPDO-NC).  The proposed algorithm can address distributed online nonconvex optimization problems while protecting privacy. To evaluate the algorithm's performance, we use the regret based on the first-order optimality condition and give a thorough analysis of the algorithm's regret. Table \ref{tab1} compares our algorithm with existing privacy-preserving distributed online optimization algorithms. The main contributions of this paper are summarized as follows.
				\begin{table}[h]
					\centering
					\caption{Comparison of privacy-preserving distributed online algorithms.}\label{tab1}
					\begin{tabular}{c|c|c|c}
						\hline
						Reference & Cost functions & Constraint & Communication graph \\ \hline
						\cite{hongyiguang_stronglyconvex} & strongly convex & convex set & fixed undirected \\
						\cite{sconvex_umbalanced_graph}, \cite{online_defferential_privacy_extra1} & convex/strongly convex &  convex set & fixed directed \\
						\cite{convex_directed_graph} & convex & convex set  & fixed directed \\
						\cite{time-varying_differential_privacy} & convex/strongly convex &convex set &$\!$time-varying undirected \\
						\cite{stronglyconvex} & convex/strongly convex & no & time-varying directed \\
						\cite{online_defferential_privacy_extra2} & convex & equality & fixed undirected\\
						our work  & nonconvex & convex set& time-varying directed \\ \hline
					\end{tabular}
				\end{table}


In our algorithm, each node adds noises to its decision variables and then broadcasts the perturbed variables to neighboring nodes via the communication network to achieve consensus. Finally, the mirror descent method is used to update the local decision variables. We prove that the proposed algorithm maintains $\epsilon$-differential privacy at each iteration. Compared with \cite{lu_nonconvex1}, our algorithm adds Laplace noises during communication with neighbors to protect privacy information, and the communication topology in our algorithm is a time-varying graph rather than a fixed graph. Compared with \cite{differential_privacy,hongyiguang_stronglyconvex,sconvex_umbalanced_graph,convex_directed_graph,time-varying_differential_privacy}, we do not require the cost function to be convex. While \cite{differential_privacy,hongyiguang_stronglyconvex,sconvex_umbalanced_graph} used fixed graphs, we consider time-varying graphs, which have broader application scenarios.
		
		To overcome the challenges posed by constraints and online nonconvex cost functions, the proposed algorithm updates decisions by solving subproblems. The subproblems include the Bregman divergence. If the Bregman divergence satisfies appropriate conditions, then the algorithm can effectively track its stationary point. By using the mirror descent algorithm, we can choose different Bregman divergences depending on the problem, which can reduce the computational cost. By appropriately selecting the step size, we prove that the regret upper bound of the algorithm is $O(\sqrt{T})$, i.e., the algorithm converges to the stationary point sublinearly. The regret order is the same as the regret for distributed online nonconvex optimization algorithms without differential privacy.
		
		The numerical simulations demonstrate the effectiveness of the proposed algorithm. By using the distributed localization problem, we demonstrate that the algorithm's regret is sublinear. Additionally, the numerical simulations reveal the tradeoff between the level of privacy protection and the algorithm's performance.

		The main structure and content of this paper are arranged as follows. Section 2 introduces the preliminaries, including the graph theory and the problem description. Section 3 presents the differentially private distributed online nonconvex optimization algorithm. Section 4 includes the differential privacy analysis and the regret analysis. Section 5 gives the numerical simulations. The final section concludes the paper.
		
		Notations: let $\mathbb{R}$ and $\mathbb{Z}$ be the set of real numbers and integers, respectively. For a vector or matrix $X$, $\Vert X \Vert$ and $\Vert X \Vert_1$ represent its 2-norm and 1-norm, respectively. Let $X^T$ be the transpose of $X$, and $[X]_i$ be the $i$-th row of the matrix. For a given function $f$, $\nabla f$ denotes its gradient. Let $\textbf{1}$ be a column vector of appropriate dimension with all elements equal to 1. Let $I$ be an identity matrix of appropriate dimension.
		For a random variable $x$, $\mathbb{P}[x]$ and $\mathbb{E}[x]$ denote its probability distribution and expectation, respectively. For a random variable $\xi$, $\xi \sim Lap(\sigma)$ indicates that $\xi$ follows a Laplace distribution with scale parameter $\sigma$, whose probability density function is given by $f(x|\sigma)=\frac{1}{2\sigma}e^{-\frac{|x|}{\sigma}}$.

		\section{Preliminaries and Problem Description}
		This section provides the problem description and preliminaries for the subsequent analysis.
		
		\subsection{Graph theory}
		The communication network is modeled by a directed graph (digraph) $G(t)=\{V, E(t), A(t)\}, t=1, 2, \dots$, where $V=\{1,2,\dots,N\}$ is the set of nodes and $E(t)$ is the set of edges. Elements in $E(t)$ are represented as pairs $(j, i)$, where $j, i \in V$, and $(j, i) \in E(t)$ indicates that node $j$ can directly send information to node $i$. Let $N_i^{in}(t)=\{j|(j,i)\in E(t)\}\cup \{i\}$ and $N_i^{out}(t)=\{j|(i,j)\in E(t)\}\cup \{i\}$ denote the in-neighbors and out-neighbors of node $i$ at each time $t$, respectively.
		The matrix $A(t)=\{a_{ij}(t)\}_{N \times N}$ is called the weight matrix of the graph $G(t)$. If $(j,i)\in E(t)$, then $a < a_{ij}(t) <1$ for some $0<a<1$, and $a_{ij}(t) = 0$ otherwise.
		If $A(t)$ is symmetric, then $G(t)$ is an undirected graph.
		If $A(t)$ satisfies $\boldsymbol{1}^T A(t) = \boldsymbol{1}^T$ and $A^T(t) \boldsymbol{1} = \boldsymbol{1}$, then $A(t)$ is called a doubly stochastic matrix, and $G(t)$ is called a balanced graph.
		For a fixed graph $G=\{V, E, A\}$, if for any two nodes $i, j \in V$, there exist $k_1, k_2, \dots, k_m \in V$ such that $(i, k_1), (k_1, k_2), \dots, (k_m, j) \in E$, then the graph $G$ is strongly connected. For the digraph $G(t)$, a $B$-edge set is defined as $E_B (t) = \bigcup_{k=(t-1)B+1}^{tB} E(k)$ for some integer $B \geq 1$. The digraph $G(t)$ is $B$-strongly connected if the digraph with the node set $V$ and the edge set $E_B(t)$ is strongly connected for all $t\geq1$ (\cite{stochastic_matrix}). 
		
		\subsection{Problem description}
		In this paper, we consider the time-varying system with $N$ nodes, where each node only knows its cost function $f_t^i$ and communicates with neighboring nodes through a directed time-varying network $G(t)$. The nodes cooperate to solve the following optimization problem
		\begin{equation}\label{p1}
			\begin{gathered}
				\min _{x\in \Omega } \sum_{i=1}^N f_t^i(x), \ t=1,\dots,T,
			\end{gathered}
		\end{equation}
		where $f_t^i: \mathbb{R}^d \rightarrow \mathbb{R}$ is the cost function of node $i$ at time $t$, and $T$ is the time horizon. The following basic assumptions are given for problem (\ref{p1}).

		\begin{assumption}\label{AS1}
			For all $t=1,\dots, T$, $G(t)$ is $B$-strongly connected, and the weight matrix $A(t)$ is doubly stochastic.
		\end{assumption}

		\begin{assumption}\label{AS2}
			The set $\Omega \subseteq \mathbb{R}^d$ is a bounded closed convex set.
		\end{assumption}
		
		\begin{assumption}\label{AS3}
			For all $t=1,\dots,T$, $i=1,\dots,N$, $f_t^i(x)$ is differentiable and $\beta$-smooth with respect to $x$, i.e., there exist $\beta>0$ such that
			\begin{align}\nonumber
				\lV\nabla f_t^i(x)-\nabla f_t^i(y)\rV\leq\beta\lV x-y\rV, \quad  \forall \, x,y\in \mathbb{R}^d.
			\end{align}
		\end{assumption}
		
		By Assumptions \ref{AS2}-\ref{AS3}, we know that there are $\eta > 0$ and $\theta > 0$ such that $\|x_1 - x_2\| \leq \eta$ for all $x_1, x_2 \in \Omega$, and $\|\nabla f_t^i(x)\| \leq \theta$ for all $i=1, \dots, N$, $t=1, \dots, T$, and $x \in \Omega$.
		Assumptions \ref{AS2}-\ref{AS3} are common in the research on distributed optimization (\cite{time-varying_differential_privacy}\cite{online_defferential_privacy_extra1}). Moreover, there are no convexity requirements for the cost functions $f_t^i$ in this paper. We give the following practical example that satisfies Assumptions \ref{AS1}-\ref{AS3}.
		
		\begin{example}\label{nonconvexexample}
			Distributed localization problem (\cite{online_Newton}\cite{example11}). Consider $N$ sensors collaborating to locate a moving target. Let $x_t^0 \in \Omega \subseteq \mathbb{R}^d$ be the true position of the target at time $t$, and $s_i$ be the position of sensor $i$. Each sensor can only obtain the distance measurement between the target's position and its position, given by
			$$d_t^i = \|s_i -x_t^0\| +\vartheta_t^i, \,\,\, i=1,\dots, N,$$
			where $\vartheta_t^i$ is the measurement error at time $t$. To estimate the target's position, the sensors collaborate to solve the following optimization problem
			$$\min_{x\in \Omega } \sum_{i=1}^N \frac{1}{2}\left| \|s_i -x\| - d_t^i \right|^2,$$
			where $\Omega$ is a bounded closed convex set representing the range of the target's position. It can be observed that $f_t^i(x)=\frac{1}{2}\left| \|s_i -x\| - d_t^i \right|^2$ is nonconvex with respect to $x$.
		\end{example}
		
		In distributed online convex optimization, each node uses local information and information transmitted from neighboring nodes to make a local prediction $x_{t+1}^i$ of the optimal solution $x^*$ at time $t+1$, where $x_{t+1}^i$ is the decision variable of node $i$ at time $t+1$.
		Node $i$ can only observe $f_t^i(x_t^i)$ at time $t$, so in order to solve the global optimization task, the nodes must communicate with each other through the network $G(t)$.

Generally speaking,
		find the global optimal point $x^*$ is impossible for nonconvex optimization.
We use the following definition of regret.
		
		\begin{definition}\label{regret_new}
			(\cite{hua2024distributed}) Let $\{x_t^i, \, t=1,\dots,T, \, i=1,\dots,N\}$ be the sequence of decisions generated by a given distributed online optimization algorithm, the regret of node $i$ is defined by
			\begin{equation}\label{expected_regret}
				\mathbb{E}[{R}_T^i] \triangleq  \max_{x\in {\Omega}} \left(\mathbb{E}\left[  \sum_{t=1}^T \sum_{j=1}^N  \langle \nabla f_t^j(x_t^i), x_t^i-x\rangle \right] \right) .
			\end{equation}
		\end{definition}
		
	 Different from the convex optimization, which aims at finding a global minimum, using the regret (\ref{expected_regret}) aims at finding a stationary point of a nonconvex cost function. Our goal is to design an efficient online distributed optimization algorithm such that the regret (\ref{expected_regret}) grows sublinearly with respect to time horizon $T$, i.e., $\lim_{T\rightarrow \infty} \mathbb{E}[{R}_T^i] /T = 0$, while also ensuring the privacy of each node's information.
		
\begin{remark}
Lu et al. (\cite{lu_nonconvex1}) provides the following individual regret for distributed online nonconvex optimization
\begin{equation}\label{regret}
			{R}_T^i \triangleq \max_{x\in \Omega} \left( \sum_{t=1}^T\sum_{j=1}^N \langle \nabla f_t^j(x_t^i), x_t^i-x\rangle \right),
\end{equation}
where $i=1,\dots,N$. If $x^*$ is a stationary point of $\sum_{t=1}^T \!\sum_{j=1}^N f_t^j(x)$, then $\sum_{t=1}^T \! \sum_{j=1}^N	\langle \nabla f_t^j(x^*), x^* - x \rangle \leq 0,$ $\forall \, x \in \Omega$.

Definition \ref{regret_new} is similar to the above, both describing the degree of violation of the first-order optimality condition.
The regret (\ref{regret}) is for the deterministic case, whereas in our algorithm, the noises are added, and therefore the regret (\ref{expected_regret}) is in the sense of expectation.

The regret (\ref{expected_regret}) is also applicable for distributed online convex optimization (\cite{time-varying_differential_privacy,stronglyconvex}), for which the regret is defined as
			\begin{equation}\label{convex_regret}
				\sum_{t=1}^T\sum_{j=1}^N \mathbb{E}[f_t^j \left(x_t^i\right)]- \sum_{t=1}^T\sum_{j=1}^N f_t^j \left(x^*\right),
			\end{equation}
			where $x^*=\arg\min_{x\in \Omega} \sum_{t=1}^T \sum_{j=1}^N f_t^j \left(x\right)$. By the properties of convex functions, we can obtain
			\begin{align*}
				\sum_{t=1}^T\sum_{j=1}^N \mathbb{E}[f_t^j \left(x_t^i\right)]- \sum_{t=1}^T\sum_{j=1}^N f_t^j \left(x^*\right) \leq \mathbb{E} \left[  \sum_{t=1}^T \sum_{j=1}^N  \langle \nabla f_t^j(x_t^i), x_t^i-x^*\rangle \right] \! \leq \! \mathbb{E}[{R}_T^i].
			\end{align*}	
From above, we can see that for convex functions, the regret (\ref{convex_regret}) grows sublinearly if  (\ref{expected_regret}) grows sublinearly.

\end{remark}
		
		\section{Algorithm Design}
		
		Without revealing the private information of individual nodes, in this paper, we propose a differentially private distributed online optimization algorithm for solving the nonconvex optimization problem (\ref{p1}), based on the distributed mirror descent algorithm and the differential privacy mechanism. The mirror descent algorithm is based on the Bregman divergence.
		
		\begin{definition}\label{bregman_definition}
			(\cite{distributed_mirror_descent1}) Given a $\omega$-strongly convex function $\varphi: \mathbb{R}^d \rightarrow\mathbb{R}$, i.e., there is  $\omega >0$ such that
			$$\varphi \left(y\right)-\varphi\left(x\right)\geq\nabla \varphi\left(x\right)^T\left(y-x\right)+\frac{\omega}{2}\Vert x-y\Vert^2, \, \forall \, x, y \in \mathbb{R}^d.$$
			The Bregman divergence generated by $\varphi$ is defined as
			\begin{equation}\label{bregman}
				D_{\varphi}(x,y)= \varphi(x)- \varphi(y)- \langle \nabla \varphi(y), x-y\rangle.
			\end{equation}
		\end{definition}
		
		The Bregman divergence measures the distortion or loss resulting from approximating $y$ by $x$. From Definition \ref{bregman_definition}, the Bregman divergence has the following properties: (i) $D_{\varphi}(x,y) \geq 0$. (ii) $D_{\varphi}(x,y)$ is strongly convex with respect to $x$.	We give some assumptions on the Bregman divergence.
		\begin{assumption}\label{AS4}
			(i) For any $x\in \Omega$, $D_{\varphi}(x,y)$ is convex respect to $y$, i.e., for any $r_i \geq 0, i=1,\dots, N,$ satisfying $\sum_{i=1}^N r_i= 1$,
			$$D_{\varphi}\left(x, \sum_{i=1}^N r_iy_i\right) \leq \sum_{i=1}^N r_i D_{\varphi}\left(x, y_i\right), \,\, \forall \, y_i \in \mathbb{R}^d.$$
			(ii) For any $x\in \Omega$, the Bregman divergence $D_{\varphi}(x,y)$ is M-smooth respect to $y$, i.e., there is $M>0$ such that
			$$D_{\varphi}(x,y_1)-D_{\varphi}(x,y_2) \leq \langle \nabla_y D_{\varphi}(x,y_2), y_1-y_2 \rangle + \frac{M}{2}\|y_1-y_2\|^2, \,\, \forall \, y_1, y_2 \in \mathbb{R}^d.$$
		\end{assumption}
		
		Assumption \ref{AS4} is a common assumption in the distributed mirror descent algorithm and is also mentioned in \cite{distributed_mirror_descent1} and \cite{hongyiguang_stronglyconvex}.
		
		Suppose an attacker can eavesdrop on the communication channel and intercept messages exchanged between nodes, which may lead to privacy leakage. To protect privacy, we add noises to the communication. At time $t$, each node $i$ adds noises to perturb the decision variables before communication. Then, each node communicates the perturbed decision variables with its neighbors for consensus. The specific iteration process is given by
		\begin{equation}\label{algorithm1}
			z_{t}^i =\sum_{j=1}^N a_{ij}(t) \left(x_t^j + \xi_t^j\right),
		\end{equation}
		where $\xi_t^j \sim Lap(\sigma_t)$. Next, each node $i$ updates its local decision variable using the mirror descent based on the consensus variable $z_t^i$. The specific iteration process is given by
		\begin{equation}\label{algorithm2}
			x_{t+1}^i = \arg \min_{x\in \Omega} \left\{ D_{\varphi}(x,z_t^i)+ \langle \alpha_t \nabla f_t^i(x_t^i), x \rangle \right\},
		\end{equation}
		where $\alpha_t$ is the step size, and $D_{\varphi}(x,z_t^i)$ is given by (\ref{bregman}). In (\ref{algorithm2}), $D_{\varphi}(x,z_t^i)+ \langle \alpha_t \nabla f_t^i(x_t^i), x \rangle$ is strongly convex with respect to $x$, which ensures a unique solution for $x_{t+1}^i$. The pseudocode for the algorithm is given as follows.
		
		\begin{algorithm}[H]
			\begin{algorithmic}[1] \label{algo1}
				\caption{Differentially private distributed online mirror descent algorithm for nonconvex optimization (DPDO-NC).}
				\State Input: step size $\alpha_t=\frac{1}{N\sqrt{t}}$, privacy level $\epsilon$, noise magnitude $\sigma_t =\frac{2\sqrt{d}\alpha_t\theta}{\omega \epsilon}$, Bregman divergence $D_{\varphi}(x,y)$, initial value $\{x_1^i\}_{i=1}^N \in \Omega$, weighted matrix $A(t)$.
				\For{$t = 1, 2, \dots, T$}
				\For{$i = 1, 2, \dots, N$ }
				\State Disturb $x_t^i$ by adding noises $\xi_t^i \sim Lap(\sigma_t)$, and broadcast $q_t^j=x_t^j + \xi_t^j$ to neighboring nodes.
				\State $z_{t}^i =\sum_{j=1}^N a_{ij}(t) q_t^j$.
				\State Update $x_{t+1}^i = \arg \min_{x\in \Omega} \left\{ D_{\varphi}(x,z_t^i)+ \langle \alpha_t \nabla f_t^i(x_t^i), x \rangle \right\}.$
				\EndFor
				\EndFor
				\State Output: $\{x_{T}^i\}_{i=1}^N$.
			\end{algorithmic}
		\end{algorithm}

		\begin{remark}\label{divergence_example}
			Selecting Bregman functions based on different constraint conditions can improve computational efficiency. For example, for the optimization problem with probabilistic simplex constraint, i.e. $\min_{x\in \Omega} f(x),$ with $\Omega=\{x=[x_1, \dots, x_d]^T \in \mathbb{R}^d | x_i \geq0, \sum_{i=1}^d x_i=1 \}$, if choosing the squared Euclidean distance, then in each projection step, projecting the iterates back onto the simplex constraint is computationally complex. However, if choosing the Bregman function $\varphi(x)= \sum_{i=1}^d x_i \log x_i$, then one can deduce that $[x_{t+1}]_j = \frac{\exp{([y_{t+1}]_j-1)}}{\sum_{i=1}^d \exp{([y_{t+1}]_i-1)}}$ and $[y_{t+1}]_j = [y_{t}]_j - \alpha \frac{\partial f(x_t)}{\partial [x_t]_j}$. The iteration points automatically remain within the set $\Omega$ without additional projection computations and the computational cost is significantly reduced.
		\end{remark}
		
		\section{Algorithm Analyses}
		In this section, we provide the analyses of the differential privacy and the regret of the proposed algorithm. Denote
		$$X_t=((x_t^1)^T,\dots,(x_t^N)^T)^T, \, {X_t}'=(({x_t^1}')^T,\dots,({x_t^N}')^T)^T,$$
		$$Z_t=((z_t^1)^T,\dots,(z_t^N)^T)^T, \, {Z_t}'=(({z_t^1}')^T,\dots,({z_t^N}')^T)^T,$$
		$$Q_t=((q_t^1)^T,\dots,(q_t^N)^T)^T, \, {Q_t}'=(({q_t^1}')^T,\dots,({q_t^N}')^T)^T,$$
		$$\Xi_t=((\xi_t^1)^T,\dots,(\xi_t^N)^T)^T, \, \overline{x}_{t}=\frac{1}{N} (\textbf{1}_N^T \otimes I_d ) X_{t}.$$
		
			\subsection{Differential privacy analysis}
		
		We firstly introduce the basic concepts related to the differential privacy.
		
		\begin{definition}\label{sensitive}
			(\cite{privacy_definition}) Given two data sets $\mathcal{F}=(f_1, f_2, \dots, f_N)$ and $\mathcal{F}'=(f_1', f_2', \dots, f_N')$, if there exists an $i \in \{1, \dots, N\}$ such that $f_i \neq f_i'$ and $f_j = f_j'$ for all $j \neq i$, then the data sets $\mathcal{F}$ and $\mathcal{F}'$ are adjacent. Two adjacent data sets is denoted as ${\rm adj}(\mathcal{F}, \mathcal{F}')$.
		\end{definition}
		
		For the distributed online optimization problem $P$, let the execution of the algorithm be $\varrho=\{X_1, Q_1, Z_1\}, \dots, \{X_T, Q_T, Z_T\}$. The observation sequence in the above execution of the algorithm is $Q_1, \dots, Q_T$ (the information exchanged during the algorithm's execution). Denote the mapping from execution to observation by $A(\varrho) \triangleq Q_1, \dots, Q_T$. Let $Obs$ be the observation sequence of the algorithm for any distributed online optimization problem $P$. For any initial state $X_1$, any fixed observation $O \in Obs$, and any two adjacent data sets $\mathcal{F}$ and $\mathcal{F}'$, we denote the algorithm's executions for the two distributed online optimization problems $P$ and $P'$ by $A^{-1}(\mathcal{F}, O, X_1)=\{X_1, Q_1, Z_1\}, \dots, \{X_T, Q_T, Z_T\}$ and $A^{-1}(\mathcal{F}', O, X_1)=\{X_1', Q_1', Z_1'\}, \dots$, $\{X_T', Q_T', Z_T'\}$, respectively. Let $A^{-1}_{X_t}(\mathcal{F}, O, X_1)$ be the set of elements $X_t$ in the execution $A^{-1}(\mathcal{F}, O, X_1)$. Using the above notation, we give the definition of differential privacy.
		
		\begin{definition}
			(\cite{DP-definition}) For any two adjacent data sets $\mathcal{F}$ and $\mathcal{F}'$, any initial state $X_1$, and any set of observation sequences $\mathcal{O} \subseteq Obs$, if the algorithm satisfies
			$$\mathbb{P}[A^{-1}(\mathcal{F}, \mathcal{O}, X_1)] \leq \exp(\epsilon) \mathbb{P}[A^{-1}(\mathcal{F}', \mathcal{O}, X_1)], $$
			where $\epsilon>0$ is the privacy parameter, then the algorithm satisfy $\epsilon$-differential privacy.
		\end{definition}
		
		If two data sets are adjacent, then the probability that the algorithm produces the same output for both data sets is very close. Differential privacy aims to mitigate the difference in outputs between two adjacent data sets by adding noises to the algorithm. To determine the amount of noises to add, the sensitivity plays an important role in algorithm design. We will now provide the definition of algorithm sensitivity.

		\begin{definition}\label{defeinition_sensitive}
			(\cite{DP-definition}\cite{DP-definition2}) At each time $t$, for any initial state $X_1$ and any two adjacent data sets $\mathcal{F}$ and $\mathcal{F}'$, the sensitivity of the algorithm is defined as
			$$\bigtriangleup (t) \triangleq \sup_{O\in Obs} \left\{ \sup_{X\in A^{-1}_{X_t}(\mathcal{F}, O, X_1),\, X'\in A^{-1}_{X_t}(\mathcal{F}', O, X_1)} \|X-X'\|_1 \right\}.$$
		\end{definition}
		
		In differential privacy, sensitivity is a crucial quantity that determines the amount of noises to be added in each iteration to achieve differential privacy. The sensitivity of an algorithm describes the extent to which a change in a single data point in adjacent data sets affects the algorithm. Therefore, we determine the noises magnitude by constraining sensitivity to ensure $\epsilon$-differential privacy.
		The following lemma provides a bound for the sensitivity of Algorithm 1.
		
		\begin{lemma}\label{sensitive_lemma}
			Suppose Assumption \ref{AS2} and Assumption \ref{AS4} hold. Then the sensitivity of Algorithm 1 satisfies
			\begin{equation}\label{sensitive_lemma_equation}
				\bigtriangleup (t) \leq \frac{2\sqrt{d}\alpha_t \theta}{\omega}.
			\end{equation}
		\end{lemma}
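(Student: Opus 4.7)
The plan is to compare two executions of Algorithm~1 on the adjacent data sets $\mathcal{F}$ and $\mathcal{F}'$, which by Definition \ref{sensitive} differ only at a single node index, say $i_0$, and to track how that discrepancy propagates through the iterations when the observation $O=(Q_1,\ldots,Q_T)$ and the initial state $X_1$ are held fixed. Because fixing $O$ forces $Q_t = Q_t'$ for every $t$, and since $z_t^i = \sum_{j} a_{ij}(t) q_t^j$ depends only on the common $Q_t$, I obtain $Z_t = Z_t'$ for every $t$ with no further work. The noise blocks $\Xi_t, \Xi_t'$ simply adjust (via $\xi_t^j = q_t^j - x_t^j$) to absorb any discrepancy between $X_t$ and $X_t'$, so they place no constraint on the analysis; the only channel through which $\mathcal{F}$ and $\mathcal{F}'$ can separate the two trajectories is the gradient term $\alpha_t \nabla f_t^i(x_t^i)$ in (\ref{algorithm2}).

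The core of the argument is a stability estimate for the constrained arg-min step (\ref{algorithm2}). Fix a node $i$, and write $x_\star = x_{t+1}^i$, $x_\star' = {x_{t+1}^i}'$, $g = \alpha_t \nabla f_t^i(x_t^i)$, $g' = \alpha_t \nabla {f_t^i}'({x_t^i}')$. Both $x_\star$ and $x_\star'$ minimize $\omega$-strongly convex objectives over $\Omega$ that share the same Bregman term $D_{\varphi}(\cdot, z_t^i)$. Writing down the two variational optimality conditions on $\Omega$ at $x_\star$ (tested against $x_\star'$) and at $x_\star'$ (tested against $x_\star$), summing them, and using the strong monotonicity of $\nabla_x D_{\varphi}(\cdot, z_t^i)$ inherited from the $\omega$-strong convexity in Definition \ref{bregman_definition}, I will reach
$$\omega\|x_\star - x_\star'\|^2 \leq \langle g - g',\, x_\star' - x_\star\rangle,$$
and Cauchy--Schwarz then yields the key contraction $\|x_\star - x_\star'\| \leq \|g - g'\|/\omega$.

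The rest is bookkeeping by induction on $t$. At $t=1$ one has $X_1 = X_1'$. For the inductive step the stability estimate splits into two regimes: for $i \neq i_0$ the gradient functions coincide and, by the induction hypothesis, $x_t^i = {x_t^i}'$, so $g = g'$ and $x_{t+1}^i = {x_{t+1}^i}'$; for $i = i_0$ the gradients may disagree but each is bounded by $\theta$ (the uniform bound derived from Assumptions \ref{AS2}--\ref{AS3}), giving $\|g - g'\| \leq 2\alpha_t \theta$ and hence $\|x_{t+1}^{i_0} - {x_{t+1}^{i_0}}'\| \leq 2\alpha_t \theta/\omega$. Since $X_t - X_t'$ is therefore supported only in the $i_0$-th $\mathbb{R}^d$-block, converting from the $2$-norm to the $1$-norm via $\|v\|_1 \leq \sqrt{d}\,\|v\|$ for $v \in \mathbb{R}^d$ produces (\ref{sensitive_lemma_equation}). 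I expect the main obstacle to be the arg-min stability step itself: the minimizers are constrained to $\Omega$ so one cannot set $\nabla = 0$ and subtract, and the contraction must be extracted from the combination of the two variational inequalities with the $\omega$-strong convexity of $D_{\varphi}$ in its first argument; the smoothness clause Assumption \ref{AS4}(ii) in fact plays no role here.
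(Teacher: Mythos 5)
Your proposal is correct, and it reaches the same bound through a slightly different mechanism than the paper. Both arguments hinge on the same two pillars: fixing the observation forces $q_t^i={q_t^i}'$ and hence $z_t^i={z_t^i}'$ for all $t,i$, and the $\omega$-strong convexity of $\varphi$ together with the gradient bound $\theta$ controls how far the arg-min step can move. The paper, however, does not compare the two minimizers directly: it tests the optimality condition of (\ref{algorithm2}) at the single point $x=z_t^i$ to get $\|x_{t+1}^i-z_t^i\|\le \alpha_t\theta/\omega$ (and likewise for the primed run), then uses the triangle inequality through the common point $z_t^i={z_t^i}'$ and the $\ell_1$--$\ell_2$ conversion to land on $2\sqrt{d}\alpha_t\theta/\omega$. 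You instead derive a genuine two-point stability estimate, summing the two variational inequalities and invoking strong monotonicity of $\nabla\varphi$ to get $\|x_{t+1}^i-{x_{t+1}^i}'\|\le\|g-g'\|/\omega\le 2\alpha_t\theta/\omega$, and you run an explicit induction showing that the iterates of all nodes $j\neq i_0$ coincide, so that only one $\mathbb{R}^d$-block contributes to the $\ell_1$ norm. That induction is actually a point in your favor: the paper implicitly assumes $\|X_t-X_t'\|_1=\|x_t^{i_0}-{x_t^{i_0}}'\|_1$ without spelling out why the non-differing nodes' states agree, and your argument supplies exactly that justification. Your observation that Assumption \ref{AS4}(ii) is not needed (only the $\omega$-strong convexity from Definition \ref{bregman_definition}) matches the paper's proof, which likewise never uses it; and the minor index shift (bounding $\|x_{t+1}^i-{x_{t+1}^i}'\|$ with $\alpha_t$ while $\bigtriangleup(t)$ is defined via $X_t$) is present in the paper's own proof as well, so it is a shared convention rather than a gap on your side.
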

		
		\begin{proof}
			Based on (\ref{algorithm2}) and the first-order optimality condition, we have
			\begin{equation}\label{sensitive_lemma_equation2}
				\left\langle \alpha_t \nabla f_t^i(x_t^i) + \nabla_x D_{\varphi}(x_{t+1}^i, z_t^i), x-x_{t+1}^i \right\rangle \geq 0.
			\end{equation}
			By (\ref{bregman}), we have
			\begin{equation}\label{1010}
				\nabla_x D_{\varphi}(x_{t+1}^i, z_t^i)= \nabla \varphi(x_{t+1}^i)- \nabla \varphi(z_t^i).
			\end{equation}
			Let $x=z_t^i$, by (\ref{sensitive_lemma_equation2}) and (\ref{1010}), we obtain
			\begin{align*}
				\left\langle \alpha_t \nabla f_t^i(x_t^i), z_t^i-x_{t+1}^i \right\rangle &\geq \left\langle  \nabla \varphi(z_t^i)-\nabla \varphi(x_{t+1}^i), z_t^i-x_{t+1}^i \right\rangle\\
				&\geq \omega \|z_t^i- x_{t+1}^i\|^2.
			\end{align*}
			By Assumption \ref{AS2} and Cauchy-Schwarz inequality, we have
			$$\omega \|z_t^i- x_{t+1}^i\|^2 \leq \alpha_t\|\nabla f_t^i(x_t^i)\|\|z_t^i-x_{t+1}^i\| \leq \alpha_t \theta \|z_t^i-x_{t+1}^i\|.$$
			Therefore, we have
			\begin{equation}\label{sensitive_lemma_equation3}
				\|x_{t+1}^i-z_t^i\| \leq \frac{\alpha_t \theta}{\omega}.
			\end{equation}
			Similarly, we have
			\begin{equation}\label{sensitive_lemma_equation4}
				\|{x_{t+1}^i}'-{z_t^i}'\| \leq \frac{\alpha_t \theta}{\omega}.
			\end{equation}
			At each time $t$, by Definition \ref{defeinition_sensitive}, the attacker observes the same information from two adjacent data sets, i.e., $q_t^i = {q_t^i}'$.
			From (\ref{algorithm1}), we know that $z_t^i={z_t^i}', \forall \, t=1, \dots, T, \forall \, i =1,\dots, N$. By (\ref{sensitive_lemma_equation3}), (\ref{sensitive_lemma_equation4}) and the triangle inequality, we have
			\begin{align}
				\|A^{-1}(\mathcal{F}, O, X_1)-A^{-1}(\mathcal{F}', O, X_1)\|_1 &=
				\|x_{t+1}^i-{x_{t+1}^i}'\|_1 \nonumber\\
				&\leq \sqrt{d} \|x_{t+1}^i-{x_{t+1}^i}'\| \nonumber\\
				& \leq \|x_{t+1}^i-z_t^i\|+\|{x_{t+1}^i}'-{z_t^i}'\| \nonumber \\
				&\leq \frac{2\sqrt{d}\alpha_t \theta}{\omega}. \label{sensitive_lemma_equation5}
			\end{align}
			From the arbitrariness of observation $O$ and adjacent data sets $\mathcal{F}$ and $\mathcal{F}'$, as well as (\ref{sensitive_lemma_equation5}), we obtain (\ref{sensitive_lemma_equation}). \qed
			
		\end{proof}
		
		The data set $\mathcal{F}=\bigcup_{t=1}^T \mathcal{F}_t$ contains all the information that needs to be protected, where $\mathcal{F}_t = \{f_t^1, \dots, f_t^N\}$ represents the information that the algorithm needs to protect at time $t$. The adjacent data set of $\mathcal{F}$ is denoted as $\mathcal{F}'=\bigcup_{t=1}^T \mathcal{F}_t'$, where $\mathcal{F}_t'= \{{f_t^1}', \dots, {f_t^N}'\}$. Next, we present the $\epsilon$-differential privacy theorem.
		
		\begin{theorem}\label{theorem1}
			Suppose Assumption \ref{AS2} and Assumption \ref{AS4} hold. If $\sigma_t =\frac{\bigtriangleup (t)}{\epsilon}$, $\forall \, t=1,\dots, T$ and $\epsilon>0$, then Algorithm 1 guarantees $\epsilon$-differential privacy at each time $t$. Furthermore, over the time horizon $T$, Algorithm 1 guarantees $\hat{\epsilon}$-differential privacy, where $\hat{\epsilon}= \sum_{t=1}^T \frac{\bigtriangleup (t)}{\sigma_t}$.
		\end{theorem}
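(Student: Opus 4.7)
The plan is to reduce Theorem 1 to the standard privacy guarantee of the Laplace mechanism, using Lemma 1 as the sensitivity input and sequential composition for the horizon bound. Essentially all the real work has already been done in Lemma 1; what remains is to push the inequality for Laplace densities through the notation of Definitions 4 and 5.

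First, I would fix an initial state $X_1$, a time $t$, and an observation $O\in Obs$ with history $O_1,\dots,O_{t-1}$ determining the observed communication through step $t-1$. By the recursions (\ref{algorithm1})--(\ref{algorithm2}) and the argument already used in the proof of Lemma 1, conditioning on the same observed $Q_1,\dots,Q_{t-1}$ forces $Z_{t-1}=Z_{t-1}'$ in the two executions, so only the node whose cost function differs can contribute to $X_t-X_t'$, and Definition \ref{defeinition_sensitive} gives $\|X_t-X_t'\|_1\le\bigtriangleup(t)$. Since $Q_t=X_t+\Xi_t$ with independent $Lap(\sigma_t)$ entries, the conditional density of $Q_t$ given the history is
\begin{equation*}
p(Q_t\mid X_t)=\frac{1}{(2\sigma_t)^{Nd}}\exp\!\left(-\frac{\|Q_t-X_t\|_1}{\sigma_t}\right),
\end{equation*}
and applying the reverse triangle inequality coordinatewise yields
\begin{equation*}
\frac{p(Q_t\mid X_t)}{p(Q_t\mid X_t')}\le\exp\!\left(\frac{\|X_t-X_t'\|_1}{\sigma_t}\right)\le\exp\!\left(\frac{\bigtriangleup(t)}{\sigma_t}\right).
\end{equation*}
Choosing $\sigma_t=\bigtriangleup(t)/\epsilon$ makes this ratio at most $\exp(\epsilon)$; integrating over an arbitrary set $\mathcal{O}\subseteq Obs$ of observations at time $t$ then gives $\mathbb{P}[A^{-1}(\mathcal{F},\mathcal{O},X_1)]\le\exp(\epsilon)\,\mathbb{P}[A^{-1}(\mathcal{F}',\mathcal{O},X_1)]$, establishing $\epsilon$-differential privacy at each time $t$.

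For the horizon-$T$ claim I would use sequential composition. Because the noise vectors $\Xi_1,\dots,\Xi_T$ are mutually independent, the density of the full observation $(Q_1,\dots,Q_T)$ factorizes along the history, and the single-step ratio bound applied at each $t$ multiplies. The total exponent sums to $\sum_{t=1}^T\bigtriangleup(t)/\sigma_t$, which is exactly $\hat{\epsilon}$; integrating over a joint observation set $\mathcal{O}\subseteq Obs$ completes the proof.

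I do not anticipate a genuine obstacle here: Lemma 1 has already absorbed all of the algorithmic content (the mirror-descent update and the compactness and smoothness assumptions), and the Laplace-mechanism inequality is standard. The only point requiring care is the bookkeeping that the sensitivity in Definition \ref{defeinition_sensitive} is defined with respect to the same observation $O$ and initial state $X_1$, so when unfolding the density ratio I must condition on the history rather than on the marginal law of $X_t$; this matches exactly what the composition argument needs.
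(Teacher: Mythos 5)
Your proposal is correct and follows essentially the same route as the paper: bound the per-step Laplace density ratio by $\exp\left(\|X_t-X_t'\|_1/\sigma_t\right)\le\exp\left(\bigtriangleup(t)/\sigma_t\right)$ using the sensitivity from Lemma \ref{sensitive_lemma} and the triangle inequality, then multiply the per-step bounds (sequential composition via independence of the Laplace noise) to get the horizon-$T$ guarantee with exponent $\hat{\epsilon}=\sum_{t=1}^T\bigtriangleup(t)/\sigma_t$. Your explicit conditioning on the observed history $Q_1,\dots,Q_{t-1}$ is just a slightly more careful rendering of the same bookkeeping the paper performs implicitly when it factors $\mathbb{P}[A^{-1}(\mathcal{F},\mathcal{O},X_1)]$ into the per-time-step terms.
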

		
		\begin{proof}
			By Definition \ref{defeinition_sensitive}, we have
			$$\|X_t-{X_t}'\|_1 \leq \bigtriangleup (t).$$
			Thus, we have
			$$\sum_{i=1}^N \sum_{j=1}^d \left| [x_t^i]_j- [{x_t^i}]_j \right| = \|X_t-{X_t}'\|_1 \leq \bigtriangleup (t),$$
			where $[x_t^i]_j$ denotes the $j$-th component of the vector $x_t^i$. By the properties of the Laplace distribution and $q_t^i = {q_t^i}'$, it follows that
			\begin{align}\label{epsilon_equation}
				 \prod_{i=1}^N \prod_{j=1}^d \frac{\mathbb{P}\left[[q_t^i]_j-[x_t^i]_j\right]}{\mathbb{P}\left[[{q_t^i}']_j-[{x_t^i}']_j\right]}
				&=\prod_{i=1}^N \prod_{j=1}^d \frac{\exp \left(-\frac{\left|[q_t^i]_j-[x_t^i]_j\right|}{\sigma_t}\right)}{\exp \left(-\frac{\left|[{q_t^i}']_j-[{x_t^i}']_j\right|}{\sigma_t}\right)}  \nonumber\\
				& \leq \prod_{i=1}^N \prod_{j=1}^d \exp \left(\frac{\left|[{q_t^i}']_j-[{x_t^i}']_j-[q_t^i]_j+[x_t^i]_j\right|}{\sigma_t}\right)  \nonumber\\
				& =\exp \left(\sum_{i=1}^N \sum_{j=1}^d \frac{\left|[x_t^i]_j-[{x_t^i}']_j\right|}{\sigma_t}\right) \nonumber \\
				&  \leq \exp \left(\frac{\triangle(t)}{\sigma_t}\right).
			\end{align}
			By (\ref{epsilon_equation}), we obtain
			\begin{align*}
				\mathbb{P}[A^{-1}(\mathcal{F}_t, \mathcal{O}, X_1)] \leq \exp(\epsilon) \mathbb{P}[A^{-1}(\mathcal{F}_t', \mathcal{O}, X_1)].
			\end{align*}
			Thus, the Algorithm 1 guarantees $\epsilon$-differential privacy at each time $t$. Let $\hat{\epsilon} = \sum_{t=1}^T \frac{\bigtriangleup (t)}{\sigma_t}$, by (\ref{epsilon_equation}), we
			obtain
			\begin{align*}
				\mathbb{P}[A^{-1}(\mathcal{F}, \mathcal{O}, X_1)] &= \prod_{t=1}^T \mathbb{P}[A^{-1}(\mathcal{F}_t, \mathcal{O}, X_1)]\\
				&\leq \prod_{t=1}^T \exp \left(\frac{\triangle(t)}{\sigma_t}\right)\mathbb{P}[A^{-1}(\mathcal{F}_t', \mathcal{O}, X_1)]\\
				& = \exp(\hat{\epsilon})\mathbb{P}[A^{-1}(\mathcal{F}', \mathcal{O}, X_1)].
			\end{align*}
			\qed
		\end{proof}
		
		\begin{remark}
			According to Theorem \ref{theorem1}, Algorithm 1 satisfies $\epsilon$-differential privacy in each iteration. Since the upper bound for the sensitivity includes the step size, if choosing a decaying step size, then the sensitivity will gradually decrease. The level of differential privacy depends on the step size $\alpha_t$, the bound on the gradient $\theta$, the dimension of the decision variables $d$, the strong convexity parameter $\omega$, and the noise magnitude $\sigma_t$. The smaller $\epsilon$ is, the higher the level of differential privacy.
		\end{remark}
		
		\subsection{Regret analysis}
		
		This subsection analyzes the regret of Algorithm 1. By appropriately choosing parameters, for Algorithm 1, this section provides an upper bound $O(\sqrt{T})$ on its regret. We present some lemmas for analyzing the regret of the algorithm.
		
		For any $t \geq s \geq 1$, the state transition matrix is defined as
		$$
		\Phi(t, s)=\left\{\begin{array}{l}
			A(t-1) \cdots A(s+1) A(s), \text { if }\,\, t>s, \\
			I_N, \text { if }\,\, t=s.
		\end{array}\right.
		$$
		
		According to Property 1 in \cite{stochastic_matrix}, we have the following lemma.
		
		\begin{lemma}\label{graph_lemma}
			Suppose Assumption \ref{AS1} hold. For any $i, j \in V$, $t \geq s$, we have
			\begin{equation}
				\qquad\left|[\Phi(t, s)]_{i j}-\frac{1}{N}\right| \leq C \lambda^{t-s},
			\end{equation}
			where $C=2\left(1+a^{-(N-1)B}\right) /\left(1+a^{(N-1)B}\right)$ and $\lambda=  \left(1-a^{(N-1)B}\right)^{1 /(N-1)B}$.
		\end{lemma}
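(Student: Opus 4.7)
The plan is to exploit the doubly stochastic, $B$-strongly connected structure of the weight sequence $\{A(t)\}$ to show that $\Phi(t,s)$ mixes to the uniform matrix $\frac{1}{N}\mathbf{1}\mathbf{1}^T$ at a geometric rate in the window length $t-s$. Since every $A(k)$ is doubly stochastic by Assumption \ref{AS1}, a straightforward induction shows that $\Phi(t,s)$ is doubly stochastic as well, so in particular $\Phi(t,s)\mathbf{1} = \mathbf{1}$ and $\mathbf{1}^T \Phi(t,s) = \mathbf{1}^T$. This tells us that the desired limit $\frac{1}{N}\mathbf{1}\mathbf{1}^T$ is already the correct averaging matrix; only the rate of convergence needs work.

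The next step is to obtain a uniform positive lower bound for the entries of $\Phi(t,s)$ once the window is long enough. Fix $L=(N-1)B$. Using $B$-strong connectivity together with the fact that every nonzero entry of any $A(k)$ is at least $a$, I would argue that for any pair $(i,j)$ and any $s$, there is a directed path in the concatenated edge set of length at most $L$ from $j$ to $i$, so that the self-loops (which are always present since $i\in N_i^{in}$) can pad the path out to length exactly $L$. Multiplying along this path, I conclude $[\Phi(s+L,s)]_{ij}\geq a^{L}$. Hence every $L$-step product $\Phi(s+L,s)$ is a doubly stochastic matrix whose entries are uniformly bounded below by $\beta:=a^{(N-1)B}$.

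I would then invoke a standard ergodicity/scrambling-coefficient argument. Define $\delta(M)=\max_{j}\max_{i,i'}\bigl|[M]_{ij}-[M]_{i'j}\bigr|$. For a stochastic matrix $M$ whose entries are all at least $\beta$, one has the contraction $\delta(M\Phi)\leq (1-\beta)\,\delta(\Phi)$ for any stochastic $\Phi$. Applying this to the composition $\Phi(t,s)=\Phi(t,s+kL)\,\Phi(s+kL,s+(k-1)L)\cdots\Phi(s+L,s)$, where $k=\lfloor(t-s)/L\rfloor$, and starting from the trivial bound $\delta(\cdot)\leq 1$, I get $\delta(\Phi(t,s))\leq (1-\beta)^{k}$. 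Because $\Phi(t,s)$ is doubly stochastic, each column has mean $1/N$, so the deviation of any entry from $1/N$ is bounded by the column spread, which in turn is bounded by $\delta(\Phi(t,s))$.

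The last step is to convert $(1-\beta)^{k}$ with $k=\lfloor(t-s)/L\rfloor$ into the form $C\lambda^{t-s}$ stated in the lemma. Writing $k\geq (t-s)/L-1$ and setting $\lambda=(1-\beta)^{1/L}=\bigl(1-a^{(N-1)B}\bigr)^{1/((N-1)B)}$ gives the exponential factor $\lambda^{t-s}$, while the prefactor $(1-\beta)^{-1}$ arising from the truncation can be rewritten, after some routine algebra, in the stated form $C=2(1+a^{-(N-1)B})/(1+a^{(N-1)B})$. The main obstacle I anticipate is not the mixing argument itself, which is classical, but the careful bookkeeping needed to pin down the exact constants $C$ and $\lambda$ so that they coincide with the expressions in the statement; in practice I would simply cite Property~1 of \cite{stochastic_matrix} rather than reproduce this calibration from scratch.
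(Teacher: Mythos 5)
The paper offers no proof of this lemma at all: it simply invokes Property~1 of \cite{stochastic_matrix}, which is exactly what you fall back on at the end, so your treatment is essentially the same as the paper's. Your sketch of the underlying argument (double stochasticity of $\Phi(t,s)$, a uniform entrywise lower bound $a^{(N-1)B}$ over windows of length $(N-1)B$, and an ergodicity/scrambling-coefficient contraction) is the standard route and is sound in outline. One caveat: the prefactor your truncation actually produces is $\left(1-a^{(N-1)B}\right)^{-1}$, and no routine algebra turns this into the stated $C=2\left(1+a^{-(N-1)B}\right)/\left(1+a^{(N-1)B}\right)=2a^{-(N-1)B}$; for $a^{(N-1)B}>2/3$ your prefactor is strictly larger, so your self-contained argument yields the bound only with a different constant. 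Since the exact constant is immaterial to the regret analysis and you defer to the cited Property~1 for the calibration, this does not affect anything downstream.
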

		
		In order to prove the main results, some necessary lemmas are provided.
		\begin{lemma}\label{x_i_difference}
			Suppose Assumptions \ref{AS1}-\ref{AS3} hold. For the sequence of decisions $\{x_t^i, \, t=1,\dots,T, \, i=1,\dots,N\}$ generated  by Algorithm 1, for any $i,j \in \{1,\dots, N\}$, we have
			\begin{equation}\label{x_i_difference_equation}
				\begin{aligned}
				&\quad \mathbb{E}[\|x_t^i-x_t^j\|]\\
				& \leq \! 2\sqrt{Nd}C\lambda^{t-1}\mathbb{E}\left[\|X_1\|\right]+2\sqrt{2}NdC\sum_{k=1}^{t-1}\lambda^{t-k}\sigma_k + \frac{2\sqrt{d}N\theta C}{\omega}\sum_{k=1}^{t-1}\lambda^{t-k-1}\alpha_k.
				\end{aligned}
			\end{equation}
		\end{lemma}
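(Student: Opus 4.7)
The plan is to express $x_t^i$ via an unrolled linear recursion driven by three sources---the initial state, the Laplace perturbations, and the per-step mirror-descent residuals---then apply the consensus contraction of Lemma \ref{graph_lemma} to each source and finish with an $\ell_1/\ell_2$ calculation.

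First I would set $e_k^i:=x_{k+1}^i-z_k^i$ and note that the first-order-optimality estimate (\ref{sensitive_lemma_equation3}) in the proof of Lemma \ref{sensitive_lemma} immediately yields the deterministic bound $\|e_k^i\|\le\alpha_k\theta/\omega$. Combining (\ref{algorithm1}) and (\ref{algorithm2}) then gives the compact one-step form
\begin{equation*}
x_{k+1}^i=\sum_{m=1}^N a_{im}(k)\bigl(x_k^m+\xi_k^m\bigr)+e_k^i,
\end{equation*}
which unrolls to
\begin{equation*}
x_t^i=\sum_l[\Phi(t,1)]_{il}\,x_1^l+\sum_{k=1}^{t-1}\sum_l[\Phi(t,k)]_{il}\,\xi_k^l+\sum_{k=1}^{t-1}\sum_l[\Phi(t,k+1)]_{il}\,e_k^l.
\end{equation*}
The crucial observation is that $\xi_k^l$ enters the recursion \emph{before} being weighted by $A(k)$, so it inherits the transition factor $\Phi(t,k)$, whereas $e_k^l$ is added \emph{after} $A(k)$ and therefore inherits $\Phi(t,k+1)$; this asymmetry is precisely what produces the mismatched $\lambda^{t-k}$ and $\lambda^{t-k-1}$ factors in the target inequality.

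Second, subtracting the identities for nodes $i$ and $j$, bounding the left side via $\|\cdot\|_2\le\|\cdot\|_1$, using the triangle inequality in $\ell_1$ on the right, and applying the contraction $|[\Phi(t,s)]_{il}-[\Phi(t,s)]_{jl}|\le 2C\lambda^{t-s}$ from Lemma \ref{graph_lemma} reduces the problem to bounding the three sums $\sum_l\|x_1^l\|_1$, $\mathbb{E}\sum_l\|\xi_k^l\|_1$, and $\sum_l\|e_k^l\|_1$. Cauchy--Schwarz together with $\|y\|_1\le\sqrt{d}\|y\|_2$ gives the first as $\sqrt{Nd}\,\|X_1\|$ and the third as $N\sqrt{d}\,\alpha_k\theta/\omega$ via the residual bound above. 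For the Laplace contribution, independence of the $d$ coordinates gives $\mathbb{E}\|\xi_k^l\|_2\le\sqrt{\mathbb{E}\|\xi_k^l\|_2^2}=\sqrt{2d}\,\sigma_k$, and $\|\cdot\|_1\le\sqrt{d}\|\cdot\|_2$ then yields $\mathbb{E}\|\xi_k^l\|_1\le\sqrt{2}\,d\,\sigma_k$; summing over the $N$ nodes contributes the extra factor $N$.

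The main obstacle is the bookkeeping in the unrolling step---correctly identifying which transition matrix multiplies the Laplace injection versus the mirror-descent residual. Once that is pinned down, assembling the three contributions and taking expectation produces (\ref{x_i_difference_equation}) directly, without any convexity or smoothness argument beyond what is already encoded in the proof of Lemma \ref{sensitive_lemma}.
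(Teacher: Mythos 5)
Your proof is correct and follows essentially the same route as the paper: unroll the recursion into initial-state, Laplace-noise, and mirror-descent-residual contributions carrying the factors $\Phi(t,1)$, $\Phi(t,k)$, $\Phi(t,k+1)$ respectively, and apply the geometric contraction of Lemma \ref{graph_lemma}. The only cosmetic difference is that you compare the $i$-th and $j$-th rows of $\Phi(t,s)$ directly (the paper instead bounds $\|x_t^i-\overline{x}_t\|$ and uses the triangle inequality, which is the same estimate in disguise) and do the bookkeeping in $\ell_1$ rather than with operator norms; both yield the identical constants in (\ref{x_i_difference_equation}).
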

		
		\begin{proof}
			Firstly, denote $p_t^i= x_{t+1}^i - z_t^i$ and $P_t= ((p_t^1)^T,\dots,(p_t^N)^T)^T$. By (\ref{sensitive_lemma_equation3}), we have
			\begin{equation}\label{x_i_difference_equation1}
				\|p_t^i\|\leq \frac{\alpha_t \theta}{\omega}.
			\end{equation}
			By (\ref{algorithm1}), we obtain
			\begin{align}
				X_t &=Z_{t-1}+P_{t-1} \nonumber \\
				&= \left(A(t-1) \otimes I_d\right)\left(X_{t-1}+ \Xi_{t-1}\right)+P_{t-1} \nonumber\\
				&= \left( \Phi(t,1) \otimes I_d\right)X_1 + \sum_{k=1}^{t-1}\left( \Phi(t,k) \otimes I_d\right)\Xi_k+ \sum_{k=1}^{t-1}\left( \Phi(t,k+1) \otimes I_d\right)P_k. \label{x_i_difference1}
			\end{align}
			By (\ref{x_i_difference1}), we obtain
			\begin{align}
				x_t^i= \left( [\Phi(t,1)]_i \otimes I_d\right)X_1
				\!+\! \sum_{k=1}^{t-1}\left( [\Phi(t,k)]_i \otimes I_d\right)\Xi_k \!+\! \sum_{k=1}^{t-1}\left( [\Phi(t,k+1)]_i \otimes I_d\right)P_k. \label{x_i_difference2}
			\end{align}
			By Assumption \ref{AS1} and (\ref{x_i_difference1}), we have
			\begin{align}\label{x_average}
				\overline{x}_{t} &=\frac{1}{N}(\textbf{1}_N^T \!\otimes I_d ) X_{t} \nonumber \\
				&=\frac{1}{N}(\textbf{1}_N^T \otimes I_d ) \left( \Phi(t,1) \otimes I_d\right)X_1+ \frac{1}{N}\sum_{k=1}^{t-1}(\textbf{1}_N^T \otimes I_d )\left( \Phi(t,k) \otimes I_d\right)\Xi_k \nonumber \\
				&\quad + \frac{1}{N}\sum_{k=1}^{t-1}(\textbf{1}_N^T \otimes I_d )\left( \Phi(t,k+1) \otimes I_d\right)P_k \nonumber \\
				&=\overline{x}_{1}+ \frac{1}{N}\sum_{k=1}^{t-1}\left(\textbf{1}_N^T \otimes  I_d\right)\Xi_k + \frac{1}{N}\sum_{k=1}^{t-1}\left(\textbf{1}_N^T \otimes  I_d\right)P_k.
			\end{align}
			Combining Lemma \ref{graph_lemma}, (\ref{x_i_difference_equation1}), (\ref{x_i_difference1}), and (\ref{x_i_difference2}), we obtain
			\begin{align}\label{x_i_difference_equation2}
				\mathbb{E}[\|x_t^i-\overline{x}_{t}\|]&= \mathbb{E}\left[\left\|\left(\left([\Phi(t,1)]_i-\frac{1}{N}\textbf{1}_N^T\right)\otimes I_d\right)X_1\right\|\right] \nonumber \\
				&\quad +\mathbb{E}\left[\left\|\sum_{k=1}^{t-1}\left(\left([\Phi(t,k)]_i-\frac{1}{N}\textbf{1}_N^T\right)\otimes I_d\right)\Xi_k\right\|\right] \nonumber \\
				&\quad +\mathbb{E}\left[\left\|\sum_{k=1}^{t-1}\left(\left([\Phi(t,k+1)]_i-\frac{1}{N}\textbf{1}_N^T\right)\otimes I_d\right)P_k\right\|\right] \nonumber \\
				&\leq \left\|\left(\left([\Phi(t,1)]_i-\frac{1}{N}\textbf{1}_N^T\right)\otimes I_d\right)\right\|\mathbb{E}\left[\|X_1\|\right] \nonumber \\
				&\quad +\sum_{k=1}^{t-1}\left\|\left(\left([\Phi(t,k)]_i-\frac{1}{N}\textbf{1}_N^T\right)\otimes I_d\right)\right\|\mathbb{E}\left[\|\Xi_k\|\right] \nonumber \\
				& \quad +\sum_{k=1}^{t-1}\left\|\left(\left([\Phi(t,k+1)]_i-\frac{1}{N}\textbf{1}_N^T\right)\otimes I_d\right)\right\|\mathbb{E}\left[\|P_k\|\right] \nonumber \\
				&\leq \!\!\sqrt{Nd}C\lambda^{t-1}\mathbb{E}\left[\| X_1 \!\|\right]\! \!+\!\sqrt{2}NdC \!\!\sum_{k=1}^{t-1}\!\lambda^{t-k}\sigma_k \!+\! \frac{\sqrt{d}N\theta C}{\omega}\!\!\sum_{k=1}^{t-1}\!\lambda^{t-k-1}\alpha_k,
			\end{align}
			where the last inequality holds due to Lemma \ref{graph_lemma}, (\ref{x_i_difference_equation1}), and $\mathbb{E}\left[\|\Xi_k\|\right] = \sqrt{2Nd}\sigma_t$. Therefore, for any $i, j \in \{1, \dots, N\}$, by (\ref{x_i_difference_equation2}) and the triangle inequality, we have
			(\ref{x_i_difference_equation}). \qed
		\end{proof}
		
		\begin{lemma}\label{extra_lemma}
			Suppose Assumptions \ref{AS1}-\ref{AS4} hold. For the sequence of decisions $\{x_t^i, \, t=1,\dots,T, \, i=1,\dots,N\}$ generated by Algorithm 1, for any $i\in \{1,\dots, N\}$, $x\in \Omega$, we have
			\begin{equation}\label{extra_lemma_1}
				\sum_{t=1}^T \sum_{i=1}^N \frac{1}{\alpha_t} \mathbb{E} \left[ D_{\varphi}(x, z_t^i)-D_{\varphi}(x, x_{t+1}^i)  \right] \leq  \frac{NM\eta^2}{2\alpha_T} +\sum_{t=1}^T\sum_{i=1}^N  \frac{M}{2\alpha_t} \mathbb{E}\left[\|\xi_t^i\|^2\right].
			\end{equation}
		\end{lemma}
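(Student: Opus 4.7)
The plan is to combine the two properties of the Bregman divergence from Assumption \ref{AS4} with an Abel summation (telescoping) argument. Since $A(t)$ is doubly stochastic, $z_t^i = \sum_{j=1}^N a_{ij}(t)(x_t^j + \xi_t^j)$ is a convex combination of the $x_t^j + \xi_t^j$, so Assumption \ref{AS4}(i) yields
\[
D_\varphi(x, z_t^i) \leq \sum_{j=1}^N a_{ij}(t)\, D_\varphi(x, x_t^j + \xi_t^j).
\]
Then I would apply the $M$-smoothness in Assumption \ref{AS4}(ii) with $y_1 = x_t^j + \xi_t^j$ and $y_2 = x_t^j$ to obtain
\[
D_\varphi(x, x_t^j + \xi_t^j) \leq D_\varphi(x, x_t^j) + \langle \nabla_y D_\varphi(x, x_t^j), \xi_t^j\rangle + \frac{M}{2}\|\xi_t^j\|^2.
\]

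Taking expectations kills the cross term because $\xi_t^j$ is a fresh, zero-mean Laplace draw independent of $x_t^j$ (which is measurable with respect to $\{\xi_s^k : s < t\}$ by (\ref{algorithm1})-(\ref{algorithm2})). Summing the resulting inequality over $i$ and using $\sum_{i=1}^N a_{ij}(t) = 1$ gives
\[
\sum_{i=1}^N \mathbb{E}[D_\varphi(x, z_t^i)] \leq \sum_{j=1}^N \mathbb{E}[D_\varphi(x, x_t^j)] + \frac{M}{2}\sum_{j=1}^N \mathbb{E}[\|\xi_t^j\|^2].
\]
Subtracting $\sum_{i=1}^N \mathbb{E}[D_\varphi(x, x_{t+1}^i)]$ from both sides, dividing by $\alpha_t$, and summing over $t = 1, \dots, T$ bounds the left-hand side of (\ref{extra_lemma_1}) by $\sum_{t=1}^T \frac{1}{\alpha_t}(S_t - S_{t+1}) + \sum_{t=1}^T \sum_{j=1}^N \frac{M}{2\alpha_t}\mathbb{E}[\|\xi_t^j\|^2]$, where $S_t := \sum_{j=1}^N \mathbb{E}[D_\varphi(x, x_t^j)]$.

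The last step is to telescope the first sum. Because $\alpha_t = 1/(N\sqrt{t})$ is nonincreasing, $1/\alpha_t$ is nondecreasing, so Abel summation gives
\[
\sum_{t=1}^T \frac{1}{\alpha_t}(S_t - S_{t+1}) = \frac{S_1}{\alpha_1} + \sum_{t=2}^T S_t\!\left(\frac{1}{\alpha_t} - \frac{1}{\alpha_{t-1}}\right) - \frac{S_{T+1}}{\alpha_T}.
\]
I would then invoke the uniform bound $D_\varphi(x, y) \leq \frac{M}{2}\|x-y\|^2 \leq \frac{M\eta^2}{2}$ valid for any $x, y \in \Omega$, which follows from Assumption \ref{AS4}(ii) applied at $y_2 = x$ (using $D_\varphi(x,x) = 0$ and $\nabla_y D_\varphi(x,y)|_{y=x} = 0$) together with the diameter bound $\eta$ from Assumption \ref{AS2}. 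Hence $S_t \leq NM\eta^2/2$ for all $t$, and dropping the negative $-S_{T+1}/\alpha_T$ collapses the telescope into $NM\eta^2/(2\alpha_T)$.

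The main obstacle I anticipate is cleanly justifying that $\mathbb{E}[\langle \nabla_y D_\varphi(x, x_t^j), \xi_t^j\rangle] = 0$; this requires invoking a tower property with respect to the natural filtration of the algorithm, since $\xi_t^j$ is drawn independently at time $t$ while $x_t^j$ is measurable with respect to the past. The other small technical point is deriving the uniform estimate $D_\varphi(x, y) \leq M\eta^2/2$ on $\Omega$, which is a standard corollary of $M$-smoothness but must be checked because Assumption \ref{AS4}(ii) is phrased as a difference inequality rather than as a direct upper bound.
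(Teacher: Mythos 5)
Your proposal is correct and follows essentially the same route as the paper: convexity of $D_{\varphi}(x,\cdot)$ plus double stochasticity to pass from $z_t^i$ to $x_t^j+\xi_t^j$, $M$-smoothness with the zero-mean independent noise killing the cross term, and an Abel/telescoping argument with the uniform bound $D_{\varphi}(x,y)\leq \tfrac{M}{2}\|x-y\|^2\leq \tfrac{M\eta^2}{2}$ and monotone $1/\alpha_t$. Your two flagged technical points (the tower-property justification of $\mathbb{E}[\langle \nabla_y D_{\varphi}(x,x_t^j),\xi_t^j\rangle]=0$ and deriving the uniform Bregman bound from Assumption \ref{AS4}(ii) at $y_2=x$) are exactly the steps the paper uses implicitly, so nothing is missing.
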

		
		\begin{proof}
			By Assumption \ref{AS1}, Assumption \ref{AS4} and (\ref{algorithm1}), for any $x \in \Omega$, we have
			\begin{align}
				\sum_{i=1}^N D_{\varphi}(x, z_t^i) &= \sum_{i=1}^ND_{\varphi}\left(x, \sum_{j=1}^N a_{ij}(t)(x_t^j+\xi_t^j)\right)  \nonumber \\
				&\leq \sum_{i=1}^N\sum_{j=1}^N a_{ij}(t)D_{\varphi}\left(x, x_t^j+\xi_t^j\right)  \nonumber \\
				&= \sum_{j=1}^N D_{\varphi}\left(x, x_t^j+\xi_t^j\right).  \nonumber
			\end{align}
			From (\ref{algorithm1}) and (\ref{algorithm2}), it is known that $x_t^i$ and $\xi_t^i$ are independent. Combining the properties of the Laplace distribution and Assumption \ref{AS4}, we obtain
			\begin{align}
				\mathbb{E}\left[D_{\varphi}\left(x, x_t^i+\xi_t^i\right)- D_{\varphi}\left(x, x_t^i\right)\right]
				& \leq \mathbb{E}\left[\langle \nabla_y D_{\varphi}\left(x, x_t^i\right), \xi_t^i \rangle \right] + \frac{M}{2}\mathbb{E}\left[\|\xi_t^i\|^2 \right]  \nonumber \\
				&= \frac{M}{2} \mathbb{E} \left[\|\xi_t^i\|^2\right].  \label{extra_lemma_equation1}
			\end{align}
			By Assumptions \ref{AS2} - \ref{AS4}, we obtain
			\begin{align}\label{th2_proof3}
				&\quad \sum_{t=1}^T \frac{1}{\alpha_t} \left(D_{\varphi}(x, x_t^i)-D_{\varphi}(x, x_{t+1}^i)  \right)  \nonumber \\
				&= \frac{1}{\alpha_1} D_{\varphi}(x, x_1^i)- \frac{1}{\alpha_T}D_{\varphi}(x, x_{T+1}^i) +\sum_{t=2}^{T-1}\left( \frac{1}{\alpha_t}- \frac{1}{\alpha_{t-1}}\right)D_{\varphi}(x, x_{t}^i)  \nonumber  \\
				& \leq \frac{M}{2\alpha_1}\|x-x_1^i\|^2 + \frac{M}{2}\sum_{t=2}^{T-1}\left( \frac{1}{\alpha_t}- \frac{1}{\alpha_{t-1}}\right)\|x-x_t^i\|^2  \nonumber  \\
				& \leq \frac{M\eta^2}{2\alpha_1}+ \frac{M\eta^2}{2}\sum_{t=2}^{T-1}\left( \frac{1}{\alpha_t}- \frac{1}{\alpha_{t-1}}\right)   \nonumber \\
				&= \frac{M\eta^2}{2\alpha_T},
			\end{align}
			where the first inequality holds due to $D_{\varphi}(x, x_{T+1}^i)\geq 0$ and
			$$D_{\varphi}(x, x_1^i)=D_{\varphi}(x, x_1^i)-D_{\varphi}(x, x) \leq \frac{M}{2}\|x-x_1^i\|,$$
			and the second inequality holds due to $\forall x,y\in \! \Omega$, $\|x  -y\| \! \leq \! \eta$. By (\ref{extra_lemma_equation1}) and (\ref{th2_proof3}), we have (\ref{extra_lemma_1}). \qed
		\end{proof}
		
		\begin{lemma}\label{bregman_lemma}
			(\cite{mirror_descent}) Consider the following problem
			$$\min_{x\in \Omega}  \{D_{\varphi}(x, y)+ \langle s, x\rangle\},$$
			where $s: \mathbb{R}^d \rightarrow \mathbb{R}$ is a convex function, and $D_{\varphi}(x, y)$ is given by (\ref{bregman}). $x^*$ is the optimal solution to the above problem if and only if
			$$\langle s, x^*-z \rangle \leq D_{\varphi}(z, y)-D_{\varphi}(z, x^*)-D_{\varphi}(x^*, y),  \, \forall \, z \in \Omega.$$
		\end{lemma}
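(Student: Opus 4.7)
The plan is to establish both directions via the first-order optimality condition combined with a three-point identity for the Bregman divergence. Because $\varphi$ is $\omega$-strongly convex, the objective $x \mapsto D_\varphi(x,y) + \langle s, x \rangle$ is strongly convex in $x$, so the minimization has a unique solution $x^*$, which by standard convex analysis is characterized by the variational inequality
\begin{equation*}
\langle \nabla_x D_\varphi(x^*,y) + s,\, z - x^* \rangle \geq 0, \quad \forall\, z \in \Omega.
\end{equation*}
From the definition (\ref{bregman}) one computes $\nabla_x D_\varphi(x,y) = \nabla \varphi(x) - \nabla \varphi(y)$, so this rearranges to
\begin{equation*}
\langle s,\, x^* - z \rangle \leq \langle \nabla \varphi(x^*) - \nabla \varphi(y),\, z - x^* \rangle.
\end{equation*}

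The central algebraic step is the three-point identity for the Bregman divergence. Expanding
\begin{align*}
D_\varphi(z,y) &= \varphi(z) - \varphi(y) - \langle \nabla \varphi(y), z-y\rangle,\\
D_\varphi(z,x^*) &= \varphi(z) - \varphi(x^*) - \langle \nabla \varphi(x^*), z-x^*\rangle,\\
D_\varphi(x^*,y) &= \varphi(x^*) - \varphi(y) - \langle \nabla \varphi(y), x^*-y\rangle,
\end{align*}
the $\varphi(z)$, $\varphi(y)$, and $\varphi(x^*)$ terms cancel in the combination $D_\varphi(z,y) - D_\varphi(z,x^*) - D_\varphi(x^*,y)$, leaving
\begin{equation*}
D_\varphi(z,y) - D_\varphi(z,x^*) - D_\varphi(x^*,y) = \langle \nabla \varphi(x^*) - \nabla \varphi(y),\, z - x^* \rangle.
\end{equation*}
Substituting this identity into the rearranged optimality inequality produces the claimed bound.

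For the converse, the same identity is used in reverse: the stated inequality over all $z \in \Omega$ is algebraically equivalent to the variational inequality, which, by strong convexity of the objective, is both necessary and sufficient for $x^*$ to be the unique minimizer over $\Omega$. Since the result is a classical characterization of the mirror-descent subproblem (\cite{mirror_descent}), no genuine obstacle arises; the only care required is to correctly express $\nabla_x D_\varphi(\cdot, y)$ and to verify the cancellations in the three-point identity, after which both directions follow in parallel.
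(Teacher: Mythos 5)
Your proof is correct and complete: the paper itself gives no proof of this lemma (it only cites \cite{mirror_descent}), and your argument --- the first-order variational inequality $\langle \nabla\varphi(x^*)-\nabla\varphi(y)+s,\, z-x^*\rangle \geq 0$ for all $z\in\Omega$, combined with the three-point identity $D_\varphi(z,y)-D_\varphi(z,x^*)-D_\varphi(x^*,y)=\langle \nabla\varphi(x^*)-\nabla\varphi(y),\, z-x^*\rangle$ --- is exactly the standard derivation, and both directions follow since convexity of the objective makes the variational inequality necessary and sufficient. One cosmetic point: the lemma's phrasing calls $s$ a ``convex function,'' but it is used as a fixed vector in $\langle s,x\rangle$ (in the paper $s=\alpha_t\nabla f_t^j(x_t^j)$), and your treatment of $s$ as a vector is the correct reading.
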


		\begin{lemma}\label{keylemma}
			Suppose Assumptions \ref{AS1}-\ref{AS4} hold. For the sequence of decisions $\{x_t^i, \, t=1,\dots,T, \, i=1,\dots,N\}$ generated by Algorithm 1, for any $x\in \Omega$, we have
			\begin{align}\label{keylemma_equation1}
				&\quad \mathbb{E} \left[ \sum_{t=1}^T\sum_{j=1}^N \langle \nabla f_t^j(x_t^i), x_t^i- x \rangle \right] \nonumber \\
				&\leq \sum_{t=1}^T\sum_{j=1}^N\beta\eta \mathbb{E}\left[\|x_t^i-x_t^j\|\right]+\sum_{t=1}^T\sum_{j=1}^N\theta \mathbb{E}\left[\|x_t^i- z_t^j\|\right]+\frac{NM\eta^2}{2\alpha_T} \nonumber \\
				&\quad + \sum_{t=1}^T\sum_{j=1}^N \mathbb{E}\left[\langle \nabla f_t^j(x_t^j), z_t^j- x_{t+1}^j \rangle\right]
				+\sum_{t=1}^T\sum_{j=1}^N \frac{M}{2\alpha_t}\mathbb{E}\left[\|\xi_t^j\|^2\right].
			\end{align}
		\end{lemma}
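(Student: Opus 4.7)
The plan is to expand $\langle \nabla f_t^j(x_t^i), x_t^i - x \rangle$ through a chain of intermediate points $x_t^j$, $z_t^j$, $x_{t+1}^j$ and match each resulting piece against one of the terms on the right-hand side. Concretely, for every $t,j$ I would write
\begin{align*}
\langle \nabla f_t^j(x_t^i), x_t^i - x \rangle
&= \langle \nabla f_t^j(x_t^i) - \nabla f_t^j(x_t^j),\, x_t^i - x \rangle \\
&\quad + \langle \nabla f_t^j(x_t^j),\, x_t^i - z_t^j \rangle + \langle \nabla f_t^j(x_t^j),\, z_t^j - x_{t+1}^j \rangle \\
&\quad + \langle \nabla f_t^j(x_t^j),\, x_{t+1}^j - x \rangle.
\end{align*}
The first piece is immediately bounded by $\beta\eta\|x_t^i - x_t^j\|$ via Assumption \ref{AS3} (Lipschitz gradient) combined with the diameter bound $\|x_t^i - x\|\leq \eta$ from Assumption \ref{AS2}. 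The second piece is bounded by $\theta\|x_t^i - z_t^j\|$ using Cauchy--Schwarz and the gradient bound $\|\nabla f_t^j(x_t^j)\|\leq \theta$. The third piece already appears verbatim on the right-hand side after taking expectation.

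The main work is the fourth piece, for which I would invoke Lemma \ref{bregman_lemma} applied to the mirror descent subproblem (\ref{algorithm2}). Taking $y = z_t^j$, $s = \alpha_t \nabla f_t^j(x_t^j)$, and $x^* = x_{t+1}^j$, and choosing the free variable $z$ in Lemma \ref{bregman_lemma} to be the point $x\in\Omega$ of interest, I obtain
\begin{equation*}
\alpha_t \langle \nabla f_t^j(x_t^j),\, x_{t+1}^j - x \rangle \leq D_{\varphi}(x, z_t^j) - D_{\varphi}(x, x_{t+1}^j) - D_{\varphi}(x_{t+1}^j, z_t^j).
\end{equation*}
Dropping the nonnegative term $D_{\varphi}(x_{t+1}^j, z_t^j)$ and dividing by $\alpha_t > 0$ yields a bound of the form $\frac{1}{\alpha_t}\bigl[D_{\varphi}(x, z_t^j) - D_{\varphi}(x, x_{t+1}^j)\bigr]$ for this piece.

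Summing all four pieces over $t=1,\dots,T$ and $j=1,\dots,N$ and taking expectations, the first three assembled pieces directly produce the first two sums and the fourth sum on the right-hand side of (\ref{keylemma_equation1}). The accumulated contribution from the optimality bound, namely $\sum_{t,j}\frac{1}{\alpha_t}\mathbb{E}[D_{\varphi}(x, z_t^j) - D_{\varphi}(x, x_{t+1}^j)]$, is then handled by invoking Lemma \ref{extra_lemma} exactly as stated, which supplies the remaining terms $\frac{NM\eta^2}{2\alpha_T}$ and $\sum_{t,j}\frac{M}{2\alpha_t}\mathbb{E}[\|\xi_t^j\|^2]$. The main obstacle is organizing the telescoping decomposition so that the gap $z_t^j \to x_{t+1}^j$ is isolated as its own inner-product term (it cannot be absorbed by the Bregman inequality without creating error), while ensuring the remaining Bregman differences match the structure required by Lemma \ref{extra_lemma}; everything else is routine application of Assumptions \ref{AS2}--\ref{AS3} and Cauchy--Schwarz.
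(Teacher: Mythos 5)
Your proposal is correct and follows essentially the same route as the paper: the same chain of intermediate points $x_t^j, z_t^j, x_{t+1}^j$, the same application of Lemma \ref{bregman_lemma} with $s=\alpha_t\nabla f_t^j(x_t^j)$, $y=z_t^j$, $x^*=x_{t+1}^j$, $z=x$ (the paper merely keeps $\langle \nabla f_t^j(x_t^j), z_t^j-x\rangle$ together and splits it inside the Bregman step rather than up front), and the same final invocation of Lemma \ref{extra_lemma} to absorb the accumulated Bregman differences. No gaps.
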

		
		\begin{proof}
			Denote $s=\alpha_t \nabla f_t^j(x_t^j)$, $y=z_t^j$, $x^*= x_{t+1}^j$, and $z=x\in \Omega$. By Lemma \ref{bregman_lemma}, we have
			$$D_{\varphi}(x, x_{t+1}^j)+D_{\varphi}(x_{t+1}^j, z_t^j)-D_{\varphi}(x, z_t^j) \leq \langle \alpha_t \nabla f_t^j(x_t^j), x- x_{t+1}^j\rangle.$$
			From the non-negativity of Bregman divergence and the above inequality,
			we have
			\begin{align}
				D_{\varphi}(x, x_{t+1}^j)-D_{\varphi}(x, z_t^j) &\leq \langle  \alpha_t \nabla f_t^j(x_t^j), x- x_{t+1}^j\rangle \nonumber \\
				&= \langle \alpha_t \nabla f_t^j(x_t^j), x- z_{t}^j\rangle + \langle \alpha_t \nabla f_t^j(x_t^j), z_t^j- x_{t+1}^j\rangle. \nonumber
			\end{align}
			Rearranging and summing the above inequality yields
			\begin{align}\label{keylemma_equation2}
				&\quad \sum_{t=1}^T\sum_{j=1}^N \mathbb{E}\left[ \langle  \nabla f_t^j(x_t^j), z_{t}^j-x\rangle \right] \nonumber \\
				& \leq \sum_{t=1}^T\sum_{j=1}^N \mathbb{E}\left[ \langle  \nabla f_t^j(x_t^j), z_{t}^j-x_{t+1}^j\rangle \right] +\sum_{t=1}^T \sum_{j=1}^N \frac{1}{\alpha_t} \mathbb{E} \left[ D_{\varphi}(x, z_t^j)-D_{\varphi}(x, x_{t+1}^j)  \right].
			\end{align}
			By Assumptions \ref{AS1}- \ref{AS3}, for any $i, j \in \{1,\dots, N\}$, $ x\in \Omega$, we have
			\begin{align}\label{keylemma_equation3}
				&\quad \mathbb{E}\left[\sum_{t=1}^T \sum_{j=1}^N \langle \nabla f_t^j(x_t^i), x_{t}^i-x\rangle \right] \nonumber \\
				&=\! \sum_{t=1}^T \!\sum_{j=1}^N \! \mathbb{E}\!\left[\!\langle \nabla f_t^j(x_t^i) \!-\!\! \nabla f_t^j(x_t^j), x_t^i \!-\! x \rangle \!+\! \langle \nabla f_t^j(x_t^j), z_t^j \!-\! x \rangle \!+\! \langle \nabla f_t^j(x_t^j), x_t^i \!-\! z_t^j \rangle \!\right] \nonumber \\
				& \leq \sum_{t=1}^T \sum_{j=1}^N \mathbb{E} \left[\beta\eta\|x_t^i - x_t^j\| + \theta\| x_t^i - z_t^j\| + \langle \nabla f_t^j(x_t^j), z_t^j-x \rangle \right],
			\end{align}
			where the inequality holds due to $\|\nabla f_t^j(x_t^i) - \nabla f_t^j(x_t^j)\| \leq \beta\|x_t^i -x_t^j\|$, $\|x_t^i -x\| \leq \eta$, and $\|\nabla f_t^j(x_t^j)\| \leq \theta$.
			Combining (\ref{extra_lemma_1}), (\ref{keylemma_equation2}), and (\ref{keylemma_equation3}), the lemma is thus proved. \qed

		\end{proof}
		
		Next, we present the theorem for the regret analysis of Algorithm 1.
		
		\begin{theorem}\label{theorem2}
			Suppose Assumptions \ref{AS1}-\ref{AS4} hold. If $\alpha_t=\frac{1}{N\sqrt{t}}$, $\sigma_t =\frac{2\sqrt{d}\alpha_t\theta}{\omega \epsilon}$, $t=1,\dots, T$, for any $i \in \{1,\dots, N\}$, then the regret of Algorithm 1 satisfies
			\begin{equation}\label{theorem2-equation1}
				\begin{aligned}
					\mathbb{E}[R_T^i] \leq U_1+U_2\sqrt{T},
				\end{aligned}
			\end{equation}
			where $U_1=\frac{2 N^{\frac{3}{2}}\sqrt{d}C(\beta\eta+2\theta)\mathbb{E}\left[\|X_1\|\right]}{1-\lambda}$ and $U_2=\frac{8\sqrt{2N}d\theta^2}{\omega\epsilon}+ \frac{2\sqrt{N}\theta^2}{\omega}+\frac{M\eta^2}{2}+ \frac{8d^2M\theta^2}{\omega^2 \epsilon^2}+\frac{8\sqrt{2}d^{\frac{3}{2}}NC\theta(\beta\eta+2\theta)}{\omega\epsilon(1-\lambda)} +\frac{4\sqrt{d}NC\theta(\beta\eta+2\theta)}{\omega(1-\lambda)}$, which implies $\lim_{T \rightarrow \infty} \mathbb{E}[R_T^i]/T =0$.
		\end{theorem}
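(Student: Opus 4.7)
The plan is to derive the regret bound by starting from Lemma \ref{keylemma}, which already decomposes $\mathbb{E}\big[\sum_{t=1}^T \sum_{j=1}^N \langle \nabla f_t^j(x_t^i), x_t^i - x\rangle\big]$ into five terms: (T1) a consensus error $\sum_{t,j}\beta\eta\,\mathbb{E}[\|x_t^i - x_t^j\|]$, (T2) a mixed error $\sum_{t,j}\theta\,\mathbb{E}[\|x_t^i - z_t^j\|]$, (T3) the boundary term $\tfrac{NM\eta^2}{2\alpha_T}$, (T4) a mirror-step term $\sum_{t,j}\mathbb{E}[\langle\nabla f_t^j(x_t^j), z_t^j - x_{t+1}^j\rangle]$, and (T5) a noise term $\sum_{t,j}\tfrac{M}{2\alpha_t}\mathbb{E}[\|\xi_t^j\|^2]$. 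I will bound each term under the specific choices $\alpha_t=1/(N\sqrt{t})$ and $\sigma_t=2\sqrt{d}\alpha_t\theta/(\omega\epsilon)$, and then collect the $O(1)$ contributions into $U_1$ and the $O(\sqrt{T})$ contributions into $U_2$.

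For (T1) I would plug in Lemma \ref{x_i_difference}, then swap the outer sum over $t$ with the inner sum over $k$, using $\sum_{t\ge 1}\lambda^{t-1}\le 1/(1-\lambda)$ and the Fubini-type bound $\sum_{t=1}^T\sum_{k=1}^{t-1}\lambda^{t-k}c_k \le \tfrac{1}{1-\lambda}\sum_{k=1}^{T}c_k$ for any nonnegative sequence $c_k$. With $c_k=\sigma_k$ or $c_k=\alpha_k$, together with $\sum_{t=1}^T\alpha_t = \tfrac{1}{N}\sum_{t=1}^T 1/\sqrt{t}\le \tfrac{2\sqrt{T}}{N}$ and the fact that $\sigma_t$ is a constant multiple of $\alpha_t$, I obtain an $O(1)$ contribution (from the $\lambda^{t-1}\mathbb{E}\|X_1\|$ piece) absorbed into $U_1$ and an $O(\sqrt{T})$ contribution absorbed into $U_2$. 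For (T2), expanding $z_t^j=\sum_{k}a_{jk}(t)(x_t^k+\xi_t^k)$ and the triangle inequality give $\|x_t^i - z_t^j\|\le \sum_k a_{jk}(t)\|x_t^i-x_t^k\| + \sum_k a_{jk}(t)\|\xi_t^k\|$, so the first part reuses Lemma \ref{x_i_difference} and the second uses $\mathbb{E}\|\xi_t^k\|\le \sqrt{2d}\,\sigma_t$; this again yields the same $O(1)+O(\sqrt{T})$ pattern.

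Terms (T3)–(T5) are cleaner. (T3) is immediately $\tfrac{NM\eta^2}{2}\cdot N\sqrt{T}$ (modulo the way the paper absorbs the $N$-dependence into $U_2$). For (T4), Cauchy–Schwarz together with the sensitivity bound \eqref{sensitive_lemma_equation3} gives $\mathbb{E}[\langle\nabla f_t^j(x_t^j), z_t^j - x_{t+1}^j\rangle]\le \theta\cdot \tfrac{\alpha_t\theta}{\omega}$, summing to $\tfrac{N\theta^2}{\omega}\sum_t\alpha_t = O(\sqrt{T})$. For (T5), the Laplace moment gives $\mathbb{E}\|\xi_t^j\|^2 = 2d\sigma_t^2$, so after inserting $\sigma_t$ the $1/\alpha_t$ and $\alpha_t^2$ factors collapse to a single $\alpha_t$ and the sum is again $O(\sqrt{T})$ via $\sum_t \alpha_t\le 2\sqrt{T}/N$.

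The main obstacle will be the bookkeeping: carefully swapping the order of summation in the double sums $\sum_t \sum_{k<t} \lambda^{t-k}\alpha_k$ and $\sum_t\sum_{k<t}\lambda^{t-k}\sigma_k$, and then tracking which constants from Lemma \ref{x_i_difference} and the Laplace variance contribute to $U_1$ versus $U_2$. Once the swap is done and $\sigma_t$ is substituted in terms of $\alpha_t$, every remaining time-sum reduces to either $\sum_{t=1}^T \alpha_t=O(\sqrt{T}/N)$ or a geometric sum bounded by $1/(1-\lambda)$, and taking the maximum over $x\in\Omega$ in the regret definition introduces only the diameter $\eta$ via $\|x-x_t^i\|\le\eta$, which is already encoded in (T3). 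Dividing the resulting bound by $T$ and letting $T\to\infty$ then yields $\lim_{T\to\infty}\mathbb{E}[R_T^i]/T = 0$.
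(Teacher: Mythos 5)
Your proposal is correct and follows essentially the same route as the paper: it starts from the decomposition of Lemma \ref{keylemma}, bounds the consensus and mixed terms via Lemma \ref{x_i_difference} together with the summation swap $\sum_{t=1}^T\sum_{k=1}^{t-1}\lambda^{t-k}c_k\le\frac{1}{1-\lambda}\sum_{k=1}^T c_k$, uses the Laplace moments $\mathbb{E}\|\xi_t^j\|=\sqrt{2d}\sigma_t$ and $\mathbb{E}\|\xi_t^j\|^2=2d\sigma_t^2$, and closes with $\sum_{t=1}^T\alpha_t\le 2\sqrt{T}/N$. The one genuine (local) divergence is your treatment of the mirror-step term (T4): you bound $\mathbb{E}[\langle\nabla f_t^j(x_t^j),z_t^j-x_{t+1}^j\rangle]\le\theta\cdot\frac{\alpha_t\theta}{\omega}$ directly from Cauchy--Schwarz and the pathwise inequality (\ref{sensitive_lemma_equation3}), whereas the paper instead expands $z_t^j$ over the neighbors, invokes $\mathbb{E}[\xi_t^i]=0$, and then bounds $\mathbb{E}\|x_{t+1}^j-x_t^i\|$ through the average iterate (its equations (\ref{theorem2-equation3})--(\ref{theorem2-equation44})). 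Your shortcut is valid (note $x_t^j\in\Omega$ so $\|\nabla f_t^j(x_t^j)\|\le\theta$ applies), avoids the auxiliary bounds on $\|\overline{x}_{t+1}-\overline{x}_t\|$ entirely, and yields a strictly smaller contribution ($\frac{2\theta^2}{\omega}\sqrt{T}$ versus terms dominated by $\frac{2\sqrt{N}\theta^2}{\omega}\sqrt{T}$ plus extra consensus pieces), so your final constants are bounded by the stated $U_1,U_2$ and the claimed $O(\sqrt{T})$ bound and the conclusion $\lim_{T\to\infty}\mathbb{E}[R_T^i]/T=0$ follow; you also correctly track that $\frac{NM\eta^2}{2\alpha_T}=\frac{N^2M\eta^2}{2}\sqrt{T}$ under $\alpha_T=\frac{1}{N\sqrt{T}}$, a point where the paper's own absorption of the $N$-dependence into $U_2$ is looser than your bookkeeping.
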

		
		\begin{proof}
			To prove $\max_{x\in {\Omega}} \left(  \mathbb{E}\left[ \sum_{t=1}^T \sum_{j=1}^N \langle \nabla f_t^j(x_t^i), x_t^i-x\rangle  \right] \right)$ is sublinear, it suffices to show $\mathbb{E}\left[  \sum_{t=1}^T \sum_{j=1}^N \langle \nabla f_t^j(x_t^i), x_t^i-x\rangle \right]$ is sublinear for any $x \in \Omega$. By (\ref{algorithm1}), $\|\nabla f_t^j(x_t^j)\| \leq \theta$, and $\mathbb{E}[\xi_t^i]=0$, we have
			\begin{align}\label{theorem2-equation3}
				\mathbb{E}\left[\langle \nabla f_t^j(x_t^j), z_t^j- x_{t+1}^j\rangle \right] &=\mathbb{E}\left[\langle \nabla f_t^j(x_t^j), \sum_{i=1}^N a_{ji}(t)(x_t^i + \xi_t^i)- x_{t+1}^j\rangle \right] \nonumber \\
				&=\!\! \sum_{i=1}^N \! a_{ji}(t) \!\!\left(\mathbb{E}[\langle \nabla f_t^j(x_t^j), x_t^i \!-\! x_{t+1}^j \rangle] \!+\! \mathbb{E}[\!\langle \nabla f_t^j(x_t^j), \xi_t^i \rangle] \right)\nonumber \\
				&\leq \sum_{i=1}^N a_{ji}(t)\theta \mathbb{E}[\|x_t^i- x_{t+1}^j\|].
			\end{align}
			By (\ref{algorithm1}), the triangle inequality, and $\mathbb{E}[\|\xi_t^h\|]=\sqrt{2d}\sigma_t$, we have
			\begin{align}\label{theorem2-equation4}
				\mathbb{E}[\|x_t^i - z_t^j\|] &\leq \sum_{h=1}^N a_{jh}(t)\mathbb{E}[\|x_t^i - x_t^h\|] + \mathbb{E}\left[\sum_{h=1}^N a_{jh}(t)\|\xi_t^h\|\right] \nonumber\\
				& \leq \sum_{h=1}^N a_{jh}(t)\mathbb{E}[\|x_t^i - x_t^h\|] + \sqrt{2d}\sigma_t.
			\end{align}
			By (\ref{x_i_difference_equation}), (\ref{x_average}), and the properties of the Laplace distribution, we obtain
			\begin{align}
				\mathbb{E}\left[\|\overline{x}_{t+1}-\overline{x}_{t}\| \right]
				&= \mathbb{E}\left[\left\|\frac{1}{N}\left(\textbf{1}_N^T \otimes  I_d\right)\Xi_t + \frac{1}{N}\left(\textbf{1}_N^T \otimes  I_d\right)P_t\right\|\right] \nonumber\\
				&\leq \sqrt{2dN}\sigma_t + \frac{\sqrt{N}\theta \alpha_t}{\omega}. \label{theorem2-equation41}
			\end{align}
			By the triangle inequality, (\ref{x_i_difference_equation2}), and (\ref{theorem2-equation41}), we obtain
			\begin{align}\label{theorem2-equation44}
				\mathbb{E}\left[\|x_{t+1}^j - x_t^i\| \right]&\leq \mathbb{E}\left[\|x_{t+1}^j-\overline{x}_{t+1}\| \right]+\mathbb{E}\left[\|x_t^i-\overline{x}_{t}\| \right]+ \mathbb{E}\left[\|\overline{x}_{t+1}-\overline{x}_{t}\| \right] \nonumber \\
				& \leq 2\sqrt{Nd}C\lambda^{t}\mathbb{E}\left[\|X_1\|\right]+2\sqrt{2}NdC\sum_{k=1}^{t-1}\lambda^{t-k}\sigma_k +\sqrt{2dN}\sigma_t \nonumber \\
				&\quad  + \frac{2\sqrt{d}N\theta C}{\omega}\sum_{k=1}^{t-1}\lambda^{t-k-1}\alpha_k+ \frac{\sqrt{N}\theta \alpha_t}{\omega}.
			\end{align}
			Substituting (\ref{theorem2-equation3}), (\ref{theorem2-equation4}), and (\ref{theorem2-equation44}) into (\ref{keylemma_equation1}), and combining with $\boldsymbol{1}^T A(t) = \boldsymbol{1}^T$, $A^T(t) \boldsymbol{1} = \boldsymbol{1}$ and (\ref{x_i_difference_equation}), we obtain
			\begin{align*}
				&\quad \mathbb{E}\left[ \sum_{t=1}^T\sum_{j=1}^N \langle \nabla f_t^j(x_t^i), x_t^i- x \rangle\right]\\
				&\leq \sum_{t=1}^T\sum_{j=1}^N\beta\eta \mathbb{E}[\|x_t^i-x_t^j\|]+\sum_{t=1}^T\sum_{j=1}^N\sum_{h=1}^N a_{jh}(t) \theta\mathbb{E}[\|x_t^i - x_t^h\|] + \sum_{t=1}^T N\theta\sqrt{2d}\sigma_t\\
				&\quad +\!\! \sum_{t=1}^T\sum_{j=1}^N \sum_{i=1}^N \! a_{ji}(t)\theta \mathbb{E}[\|x_t^i- x_{t+1}^j\|] \!+\!\frac{NM\eta^2}{2\alpha_T} \!+ \!+\! \sum_{t=1}^T\sum_{j=1}^N \frac{M}{2\alpha_t}\mathbb{E} \left[\|\xi_t^j\|^2\right]\\
				& \leq 2CN\sqrt{d}(\beta \eta \!+\! \theta)\!\! \sum_{t=1}^T \!\! \left( \!\!\!\sqrt{N}\lambda^{t-1}\mathbb{E}\!\left[\|X_1\|\right] \!+\!\sqrt{2d}N\sum_{k=1}^{t-1}\!\lambda^{t-k}\sigma_k \!+\! \frac{N\theta }{\omega}\!\sum_{k=1}^{t-1}\!\lambda^{t-k-1}\alpha_k \!\!\right)\\
				&\quad + 2C\theta\sum_{t=1}^T \!\left(\! N^{\frac{3}{2}}\sqrt{d} \!\lambda^{t}\mathbb{E}\left[\|X_1\|\right] \!+\!\sqrt{2}N^2d \!\sum_{k=1}^{t-1} \!\!\lambda^{t-k}\sigma_k+ \frac{\sqrt{d}N^2\theta}{\omega}\sum_{k=1}^{t-1}\lambda^{t-k-1}\alpha_k \right)\\
				& \quad +\sum_{t=1}^T\left(2\sqrt{2d}N^{\frac{3}{2}}\theta\sigma_t + \frac{N^{\frac{3}{2}}\theta^2 \alpha_t}{\omega}+  \frac{NMd\sigma_t^2}{\alpha_t}\right) +\frac{NM\eta^2}{2\alpha_T}.
			\end{align*}
			Substituting $\alpha_t=\frac{1}{N\sqrt{t}}$ and $\sigma_t =\frac{2\sqrt{d}\alpha_t \theta}{\omega \epsilon}$ into the above inequality, and combining with $\sum_{t=1}^T \lambda^{t} \leq \frac{1}{1-\lambda}$, $\sum_{t=1}^T \alpha_t\leq \frac{2\sqrt{T}}{N}$, and
			$$\sum_{t=1}^T \sum_{k=1}^{t-1} \lambda^{t-k}\alpha_k \leq \sum_{t=1}^T\sum_{k=1}^t \lambda^{t-k}\alpha_k =\sum_{k=1}^T\alpha_k\sum_{t=k}^T \lambda^{t-k} \leq \frac{2\sqrt{T}}{N(1-\lambda)},$$ we obtain
			\begin{align}\label{theorem2-equation5}
				&\quad \mathbb{E}\left[ \sum_{t=1}^T\sum_{j=1}^N \langle \nabla f_t^j(x_t^i), x_t^i- x \rangle\right] \nonumber\\
				&\leq 2 N^{\frac{3}{2}}\sqrt{d}C(\beta\eta+2\theta)\mathbb{E}\left[\|X_1\|\right]\sum_{t=1}^T \lambda^{t-1} +2\sqrt{2d}N^{\frac{3}{2}}\theta \sum_{t=1}^T \sigma_t +  \frac{N^{\frac{3}{2}}\theta^2}{\omega}\sum_{t=1}^T \alpha_t\nonumber\\
				&\quad + \frac{NM\eta^2}{2\alpha_T}+ NMd\sum_{t=1}^T \frac{\sigma_t^2}{\alpha_t} + 2\sqrt{2}N^2dC(\beta\eta+2\theta)\!\sum_{t=1}^T \!\sum_{k=1}^{t-1}\!\!\lambda^{t-k}\sigma_k \nonumber\\
				&\quad +\! \frac{2\sqrt{d}N^2\theta C(\beta\eta+2\theta)}{\omega}\!\!\sum_{t=1}^T \!\sum_{k=1}^{t-1}\lambda^{t-k-1}\alpha_k \nonumber\\
				&\leq \frac{2 N^{\frac{3}{2}}\sqrt{d}C(\beta\eta+2\theta)\mathbb{E}\left[\|X_1\|\right]}{1-\lambda} + \frac{8\sqrt{2N}d\theta^2}{\omega\epsilon}\sqrt{T} + \frac{2\sqrt{N}\theta^2}{\omega}\sqrt{T}+\frac{M\eta^2}{2}\sqrt{T} \nonumber\\
				&\quad +  \frac{8d^2M\theta^2}{\omega^2 \epsilon^2}\sqrt{T}+\frac{8\sqrt{2}d^{\frac{3}{2}}NC\theta(\beta\eta+2\theta)}{\omega\epsilon(1-\lambda)}\sqrt{T} +\frac{4\sqrt{d}NC\theta(\beta\eta+2\theta)}{\omega(1-\lambda)}\sqrt{T}.
			\end{align}
			By the arbitrariness of $x$ and combining with (\ref{theorem2-equation5}), we obtain (\ref{theorem2-equation1}). \qed
		\end{proof}
		
		\begin{remark}
			From Theorem \ref{theorem2}, we know that Algorithm 1 converges sublinearly. This is consistent with the best results in existing works on distributed online nonconvex optimization. The differential privacy mechanism we added does not affect the order of the upper bound on the regret, but only the coefficient. Furthermore, the regret of the algorithm depends on the properties of the cost function, the constraint set, the problem's dimension, the number of nodes, the connectivity of the communication graph, and the privacy parameters.
			It is worth noting that from Theorem \ref{theorem2}, we can see that the larger the privacy parameter $\epsilon$, the tighter the bound on the regret. However, in the differential privacy analysis of Theorem \ref{theorem1}, the smaller the privacy parameter, the higher the privacy level. Therefore, in practical applications, it is necessary to choose an appropriate privacy parameter to balance algorithm performance and privacy level.
		\end{remark}

		\section{Numerical Simulations}
		In this section, we demonstrate the theoretical results of the proposed algorithm (DPDO-NC) through the distributed localization problem (Example \ref{nonconvexexample}). We consider a time-varying communication topology consisting of 6 sensors. The time-varying communication topology switches between three graphs, denoted by $G_1, G_2, G_3$. The weight matrices for these three graphs are denoted as $A_1, A_2, A_3$, where
		
		{\footnotesize
		\begin{equation}\nonumber
			A_1\!\!=\!\!\!\left[\begin{array}{cccccc}
				\frac{1}{2} & 0 & 0 &0  & 0 & \frac{1}{2} \\
				\frac{1}{2} & \frac{1}{2} & 0 & 0 &0 &0  \\
				0 & \frac{1}{2} & \frac{1}{2} & 0 & 0 & 0  \\
				0 & 0 & \frac{1}{2} & \frac{1}{2} & 0 & 0  \\
				0 & 0 & 0 & \frac{1}{2} & \frac{1}{2} & 0  \\
				0 & 0 & 0 & 0 & \frac{1}{2} & \frac{1}{2} \\
			\end{array}\right]\!\!\!, \, \! A_2\!\!=\!\!\!\left[\begin{array}{cccccc}
				0 & \frac{1}{5}  & \frac{1}{5}  & \frac{1}{5}  & \frac{1}{5} & \frac{1}{5} \\
				\frac{1}{5}  & 0 & \frac{1}{5}  & \frac{1}{5}  & \frac{1}{5}  & \frac{1}{5} \\
				\frac{1}{5} & \frac{1}{5} &0 &\frac{1}{5} &\frac{1}{5} &\frac{1}{5}  \\
				\frac{1}{5} & \frac{1}{5} & \frac{1}{5}&0 &\frac{1}{5} &\frac{1}{5}  \\
				\frac{1}{5} & \frac{1}{5}& \frac{1}{5} &\frac{1}{5} & 0&\frac{1}{5} \\
				\frac{1}{5} &\frac{1}{5} &\frac{1}{5} &\frac{1}{5} &\frac{1}{5} &0 \\
			\end{array}\right]\!\!\!, \,\! A_3\!\!=\!\!\!\left[\begin{array}{cccccc}
				\frac{1}{3} & 0 &\frac{1}{3}& 0 & \frac{1}{3} & 0\\
				0 & \frac{1}{3} & 0 &\frac{1}{3}& 0 & \frac{1}{3} \\
				\frac{1}{3} & 0 & \frac{1}{3} & 0 & \frac{1}{3}&0 \\
				0 & \frac{1}{3} & 0 & \frac{1}{3} & 0 &\frac{1}{3}\\
				\frac{1}{3} & 0 & \frac{1}{3} & 0 & \frac{1}{3} & 0\\
				0 &\frac{1}{3}&0 &\frac{1}{3}& 0 & \frac{1}{3}  \\
			\end{array}\right]\!\!\!.
		\end{equation}}
	
	The evolution process of the target location is defined as
	$$x_{t+1}^0 =x_t^0+\begin{bmatrix}\frac{(-1)^{q_t}\sin(t/50)}{10t}\\\frac{-q_t\cos(t/70)}{40t}\end{bmatrix},$$
	where $q_t \sim Bernoulli(0.5)$, and initial state $x_1^0 = [0.8, 0.95]^T$. The distance measurements are given by
	$$d_t^i = \|s_i -x_t^0\| +\vartheta_t^i, \,\,\, i=1,\dots, N,$$
	where $\vartheta_t^i$ is the measurement error of the model, which follows a uniform distribution over $[0,  0.001]$. The sensors collaborate to solve the following problem
	$$\min_{x\in \Omega } \sum_{i=1}^6 \frac{1}{2}\left| \|s_i -x\| - d_t^i \right|^2,$$
	where $\Omega = \{x\in \mathbb{R}^2 | \|x\|_1 \leq 3\}$.  The initial positions of the sensors are $s_i=[0.8, 0.95]^T, i=1, \dots, 6$. The initial estimates of the sensors are $x_1^i=[0, 0]^T, i=1, \dots, 6$. Fix the number of iterations to $T=500$. The distance function of the algorithm is $\varphi = \frac{1}{2}\|x\|^2$, and the step size is $\alpha_t = \frac{1}{6\sqrt{t}}$.
	
	Figure \ref{fig1} shows the evolution of $\max_i \mathbb{E}[R_T^{i}]/T$ with different privacy levels, where the privacy parameters are set to $\epsilon=5$, $1$, and $0.5$, and the cases without differential privacy is also compared. From Figure \ref{fig1}, it can be observed that Algorithm 1 converges under different privacy levels, and the lower the privacy level, i.e., the larger $\epsilon$, the better the convergence performance of the algorithm. The parameter $\epsilon$ reveals the trade-off between privacy protection and algorithm performance, which is consistent with the theoretical results presented in this paper.
	
	\begin{figure}[htbp]
		\centering
		\includegraphics[scale=0.6]{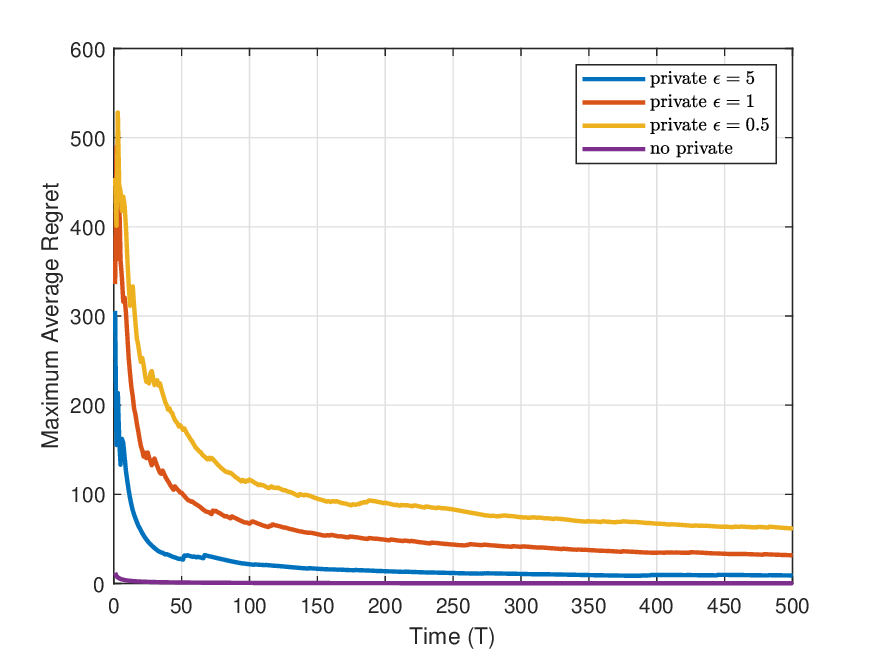}
		\caption{ Evolution of $\max_i \mathbb{E}[R_T^{i}]/T$ with different privacy levels.}
		\label{fig1}
	\end{figure}
	
		When the privacy parameter is set to $\epsilon=5$, we compare the convergence of each node. Figure \ref{fig2} depicts the variation in the regret of each node. As shown in Figure \ref{fig2}, all nodes converge, and the convergence speeds are almost identical. This indicates that through interaction with neighboring nodes, the regret of each node achieves sublinear convergence.
	
	\begin{figure}[htbp]
		\centering
		\includegraphics[scale=0.6]{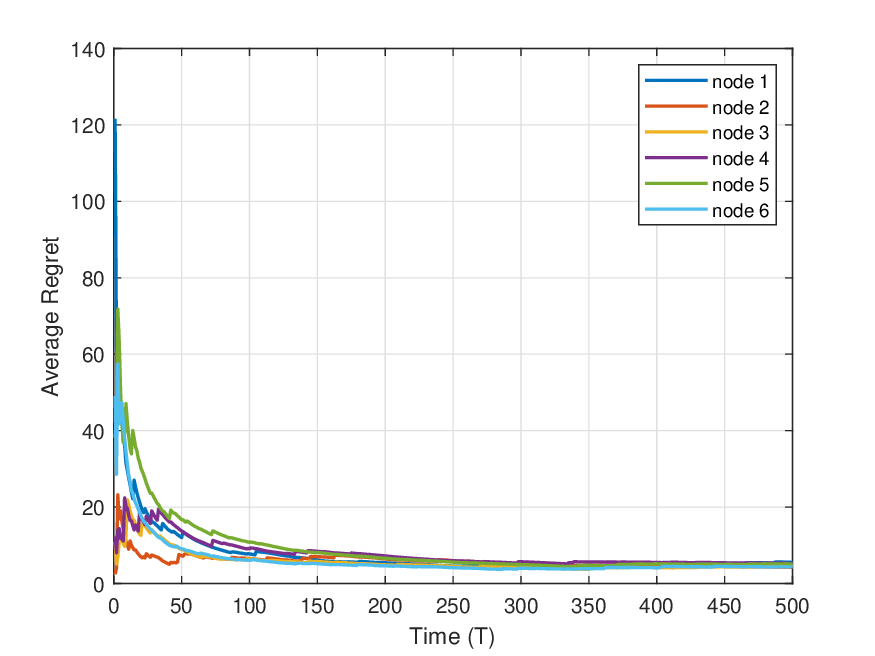}
		\caption{Comparison of each node's regret when $\epsilon=5$.}
		\label{fig2}
	\end{figure}
	
	We also present the convergence of the algorithm with different Bregman divergences. We demonstrate the algorithm's performance using the squared Euclidean distance and the Mahalanobis distance, with $\varphi = \|x\|^2$ and $\varphi = x^T Qx$, where $Q$ is a positive semi-definite matrix. Both distance functions satisfy Assumption \ref{AS4}. Figure \ref{fig3} shows the evolution of $\max_i \mathbb{E}[R_T^{i}]/T$ under different Bregman divergence with $\epsilon=5$. As seen in Figure \ref{fig3}, for different Bregman divergences, as long as they satisfy Assumption \ref{AS4}, the proposed differential privacy distributed online nonconvex optimization algorithm can converge to the stationary point, and the regret is sublinear.
	
	\begin{figure}[htbp]
		\centering
		\includegraphics[scale=0.6]{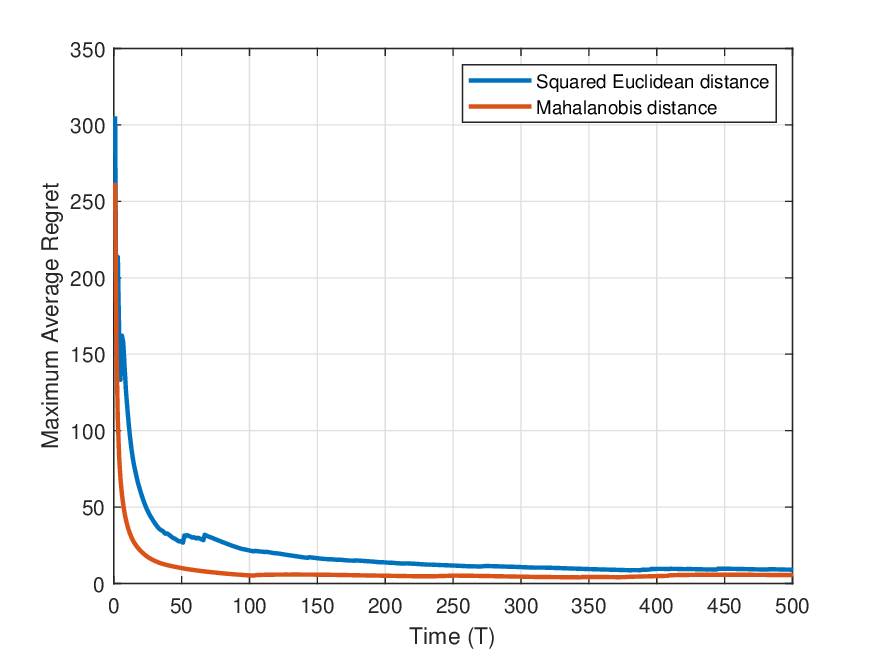}
		\caption{ Evolution of $\max_i \mathbb{E}[R_T^{i}]/T$ with different Bregman divergences.}
		\label{fig3}
	\end{figure}
	
	\section{Conclusion}
	In this paper, we study the privacy-preserving distributed online optimization for nonconvex problems over time-varying graphs. Based on the Laplace differential privacy mechanism and the distributed mirror descent algorithm, we propose a  privacy-preserving distributed online mirror descent algorithm for nonconvex optimization (DPDO-NC), which guarantees $\epsilon$-differential privacy at each iteration.
	In addition, we establish the upper bound of the regret for the proposed algorithm
	and prove that the individual regret based on the first-order optimality condition grows sublinearly, with an upper bound $O(\sqrt{T})$.
	Finally, we demonstrate the effectiveness of the algorithm through numerical simulations and analyze the relationship between the algorithm regret and the privacy level. Future research could explore more complex communication networks and scenarios with intricate constraints, as well as focus on improving the convergence rate of algorithms for solving distributed online nonconvex optimization problems.
	
	\section*{Acknowledgements}
	This work was supported  by the National Natural Science Foundation
	of China under Grant 62261136550.

		\bibliographystyle{elsarticle-num}
		\bibliography{ZL2024SCL}
	\end{document}